\definecolor{darkblue}{rgb}{0, 0, 0.5}
\newtheorem{definition}{Definition}
\newtheorem{lemma}{Lemma}
\newtheorem{proposition}{Proposition}
\theoremstyle{remark}
\newtheorem{example}{Example}
\newtheorem{remark}{Remark}
\newtheorem{condition}{Condition}
\newtheorem{assumption}{Assumption}[section]
\newtheoremstyle{example_contd}
{1em} {\topsep}%
{\upshape}
{}
{\itshape}
{.}
{1em}
{\thmname{#1} \thmnumber{ #2}\thmnote{#3} (continued)}
\theoremstyle{example_contd}
\newtheorem*{example_contd}{Example}
\title[Relating inference methods for few treated units]{On the relationship between prediction intervals, tests of sharp nulls and inference on realized treatment effects in settings with few treated units}
\author{Luis Alvarez \and Bruno Ferman}
\begin{document}

\onehalfspacing

\begin{abstract}

We study how inference methods for settings with few treated units that rely on treatment effect homogeneity extend to alternative inferential targets when treatment effects are heterogeneous -- namely, tests of sharp null hypotheses, inference on realized treatment effects, and prediction intervals. We show that inference methods for these alternative targets are deeply interconnected: they are either equivalent or become equivalent under additional assumptions. Our results show that methods designed under treatment effect homogeneity can remain valid for these alternative targets when treatment effects are stochastic, offering new theoretical justifications and insights on their applicability.

\end{abstract}

	\maketitle
	\section{Introduction}

Inference on treatment effects in settings with few treated units presents significant challenges \citep{alvarez2025inferencetreatedunits}. When the number of treated units is small --- with the extreme case being a single treated unit --- there is limited information on potential outcomes under treatment, making uncertainty quantification a difficult task. As a consequence, inference methods that remain valid in the presence of few treated units often rely on strong assumptions, particularly regarding treatment effect heterogeneity. In this paper, we focus on methods derived in ``model-based'' settings in which treatment assignment is viewed as fixed (or conditioned on) and where uncertainty stems from stochasticity in potential outcomes.\footnote{See \cite{alvarez2025inferencetreatedunits} for a discussion on the appropriateness of such modeling in settings with few treated units} In this scenario, many of the existing approaches rely on treatment effect homogeneity assumptions, imposing that treatment effects are non-stochastic. Approaches that allow for stochastic treatment effects typically shift the inferential focus: instead of constructing confidence intervals for the average treatment effect on the treated (ATT), they focus on (i) testing sharp null hypotheses,  (ii) conducting inference on the realized treatment effect, or (iii) constructing prediction intervals. 

This first alternative, rather than testing hypotheses about the ATT, tests sharp null hypotheses, which assess whether the treatment had no effect on any treated unit.\footnote{More generally, these tests can also be used to test a null that the treatment effect takes a specific value for each treated unit, with probability one.} While commonly used in design-based approaches \citep{Imbens_matching,young_QJE}, this strategy has also been applied in model-based settings \citep[e.g.][]{Chung2021,Bugni2018,LeeShaikh2014,Heckman2024}. This second alternative seeks to provide inferential statements concerning the realization of treatment effects in the sample at hand, instead of averaging over possible realizations of treatment effects. This inferential target is particularly relevant when the goal is to  understand the specific context in which treatment was delivered rather than to generalize to other settings.  There are many papers that consider treatment effects as deterministic but potentially heterogeneous across individuals, which can be implicitly seen as a setting in which we condition on the stochastic treatment effects \citep[e.g.][]{conley20211inference,carvalho2018arco,ferman2019inference,synthetic_did,alvarez2023extensions,alvarez2023inference,chernozhukov2024ttest}. Finally, this third alternative constructs set-valued functions of the data --- prediction intervals --- that aim to cover a random variable with a given confidence over repeated samples. These intervals have a long tradition in the forecasting literature (e.g. \cite{Phillips1979,Brockwell1991}). More recently, this type of construction has been considered in causal inference settings with the explicit goal of constructing bands that cover the in-sample treatment effects on the treated (viewed as random variables) with a pre-specified probability \citep[e.g.,][]{Candes,kivaranovic2020conformal,chernozhukov2021exact,CWZ2021Pnas,cattaneo2021prediction,cattaneo2023uncertainty}. 

This note provides a unifying perspective on these different inferential approaches for settings with few treated units. As a leading example, we consider first a difference-in-means estimator and then extend our results to more general estimators. We show that all these inferential approaches are deeply interconnected: they are either equivalent under the same set of assumptions required for their individual validity or become equivalent under additional assumptions. Some of these equivalences are straightforward, while others require novel results that we derive in this paper.

First, we show that inference methods for the ATT that remain valid with few treated units  under homogeneous treatment effects are also valid for testing sharp null hypotheses, and that, conversely, any test of sharp nulls provides a valid test of the ATT under treatment effect homogeneity. This follows directly from the fact that imposing a sharp null is equivalent to assuming, under the null, that treatment effects are non-stochastic and homogeneous. 

Next, we show that a broad class of inference methods originally developed under the assumption of non-stochastic and homogeneous treatment effects can also be applied in settings with stochastic treatment effects for inference on the realized treatment effect, under strong assumptions --- a sufficient one being independence between the (stochastic) treatment effects and the potential outcomes under no treatment. To the best of our knowledge, this discussion of the assumptions required for valid inference on the realized treatment effect is novel.

We also derive new results providing conditions under which inference methods valid for the realized treatment effect, when treatment effects are independent of untreated potential outcomes, can also be used to construct prediction intervals, even when treatment effects and untreated potential outcomes exhibit arbitrary dependence.  This is the case when we have a consistent estimator for the  distribution of the untreated potential outcomes of the treated (without conditioning on the realized treatment effect). 

Finally, we show that prediction interval construction and sharp null hypothesis testing are intrinsically linked: under a broad class of inference methods used in small-sample settings, each can be obtained by ``inverting'' the other, remaining valid under the same set of assumptions. While such equivalence has been noted  in specific settings (e.g. \citet{Lei2013} and \citet{chernozhukov2021exact}), we believe that, by laying it out in a general setting, we contribute to clarifying the interpretation of prediction intervals. In particular, we show that the usual practice of checking whether prediction intervals contain zero can always be seen as a valid test of the sharp null that treatment effects are homogeneous and equal to zero in the treated population, when uncertainty stems from sampling from a well-defined population. The latter connection is not specific to a few treated setting, being true in any situation where prediction intervals for treatment effects are reported.

Overall, by establishing the connections among these approaches, this paper provides new theoretical justifications for inference methods originally developed under the assumption of homogeneous treatment effects (for example, \cite{conley20211inference} and \cite{ferman2019inference}), demonstrating their validity for alternative inferential targets in settings where treatment effects are stochastic.  Specifically, we show that, in settings with stochastic treatment effects, inference methods developed under treatment effect homogeneity (i) remain valid for testing sharp null hypotheses; (ii) remain valid for inference on the realized treatment effect under strong assumptions on the dependence  between treatment effects and untreated potential outcomes; and (iii) can be used to construct prediction intervals under arbitrary dependence between treatment effects and untreated potential outcomes, subject to additional assumptions.

\section{Leading example}
\label{leading_example}
The aim of this section is to illustrate our main results by means of a simple example. 

\subsection{Setup} Consider a setting where a researcher has access to a sample with $N$ units from a population of interest. There is an outcome of interest $Y$ and a policy intervention affecting a subset of the units in the population. We denote by $D$ the indicator for the units exposed to the intervention.

We define potential outcomes $Y(1)$ and $Y(0)$, corresponding to the outcomes that would have been observed for a unit in the population were the unit assigned, respectively, treatment and non-treatment. Observed outcomes are thus given by $Y = Y(0) + D(Y(1) - Y(0))$. We consider a model-based approach, in which we condition on treatment assignment and focus on uncertainty coming from potentially unobservable shocks that determine the potential outcomes
(see \cite{alvarez2025inferencetreatedunits} for further discussion on the use of model-based designs in settings with few treated units).

    Consider the difference-in-means estimator that is computed with a sample of $N$ units, with $N_1$ being treated and $N_0$ not. This estimator is given by:
\begin{eqnarray*}
    \hat\beta_{\text{DM}} &=& \frac{1}{N_1}\sum_{i=1}^N D_i Y_{i} - \frac{1}{N_0}\sum_{i=1}^N (1-D_i) Y_{i}  \\ &=& \frac{1}{N_1}\sum_{t=1}^T D_i \alpha_i + \frac{1}{N_1}\sum_{i=1}^N D_i Y_{i}(0) -  \frac{1}{N_0}\sum_{i=1}^N (1-D_i) Y_{i}(0),   
\end{eqnarray*}
where $\alpha_i = Y_{i}(1)-Y_{i}(0)$ is the individual treatment effect.

Under the mean independence assumption, $\mathbb{E}[Y_{i}(0)|D_1,\ldots, D_N] =\mu_0$, for every $i=1,\ldots, N$, $\hat{\beta}_{\text{DM}}$ is unbiased for the expected sample average treatment effect on the treated, i.e. 
$$\mathbb{E}[\hat\beta_{\text{DM}}|D_1,\ldots, D_N] = \frac{1}{N_1}\sum_{i=1}^N D_i \mathbb{E}[\alpha_i|D_1,D_2,\ldots D_N] \eqqcolon \beta. $$

The parameter $\beta$ captures the average expected effect of the intervention on the treated units in the sample, where expectations are taken with respect to the distribution of economic uncertainty determining potential outcomes.

\begin{example}
\label{example_leading}
    To make things more concrete, we illustrate this setting with an example in which the units are farmers and the treatment is agricultural technical assistance, with $Y$ denoting an agricultural outcome of interest. Uncertainty about $Y_i(0)$ and $\alpha_i$ arises from factors such as weather shocks, health shocks, and other random events. Some of these shocks may be independent across farmers, while others may be correlated. The mean independence assumption is well justified in this context because treatment was randomly assigned at the farmer level.

\end{example}

\begin{remark}
    Our setting nests difference-in-differences designs as a particular case, in which case we take $Y$ to be the difference of outcomes between two periods, and assume that treatment only affects a subset of the units in the second period, with $D$ being the indicator of being among these units.
\end{remark}

\subsection{Challenges for uncertainty quantification when $N_1$ is small} Even though $\hat \beta_{\text{DM}}$ is unbiased for $\beta$, it is difficult to quantify uncertainty regarding it when $N_1$ is small. To see this, observe that:

\begin{equation}
\label{eq_decomposition}
\hat \beta_{\text{DM}} - \beta = \frac{1}{N_1}\sum_{t=1}^T D_i (\alpha_i-\beta) + \frac{1}{N_1}\sum_{i=1}^N D_i Y_i(0) - \frac{1}{N_0}\sum_{i=1}^N (1-D_i)Y_i(0). 
\end{equation}

Uncertainty regarding $\hat \beta_{\text{DM}}$ can be decomposed into uncertainty concerning treatment effect heterogeneity on the treated units, $D_i(\alpha_i - \beta)$, uncertainty regarding untreated potential outcomes in the treatment group, $D_iY_{i}(0)$, and uncertainty regarding untreated potential outcomes in the control group, $(1-D_i)Y_{i}(0)$.

\subsection{Existing solutions for quantifying uncertainty of untreated potential outcomes}
\label{Sec: solutions for Y(0)}

In a setting with a fixed number of treated units but many controls, we can estimate the distribution of controls $(1-D_i)Y_{i}(0)$ when $N_0\rightarrow \infty$ under weak dependence assumptions \citep{conley20211inference,alvarez2023inference}. Moreover, if we assume that the distribution of $D_iY_{i}(0)$ for the treated is the same as the distribution of $(1-D_i)Y_{i}(0)$ for the controls, then we can \textit{extrapolate} the information from the controls to quantify uncertainty on the untreated potential outcomes of the treated, even when we have only a single treated observation. For example, we can strengthen  the mean independence assumption to the following assumption.

\begin{assumption}
    \label{ass_iid}
    $Y_{i}(0)|D_1,\dots, D_N \overset{\text{iid}}{\sim} F_{Y(0)|D_1,\ldots, D_N}$
\end{assumption}

In this case, the distribution of potential outcomes in the control group may be used to infer the distribution in the treatment group \citep{conley20211inference}. This assumption can be relaxed to accommodate  heteroskedasticity depending on a set of observed covariates, as shown in \cite{ferman2019inference}. See \cite{alvarez2025inferencetreatedunits} for other alternatives, including panel data settings in which such extrapolations may come from pre-treatment periods.

\subsection{Existing solutions for quantifying uncertainty of treatment effect heterogeneity}

Quantifying this type of uncertainty is more complicated, because it depends on potential outcomes under treatment, for which, in settings with few treated units, we observe limited information. In the extreme case in which there is only a single treated unit, we observe only a single observation of $Y_i(1)$.

We describe here common alternatives that have been considered in the literature to circumvent this problem in settings with few treated units (including the case with $N_1=1$).

\subsubsection{Assume treatment effect homogeneity}

\label{sec_homogeneous}
In this case, we assume that there exists a constant $\alpha$ such that  $\alpha_i=\alpha$ for every treated unit $i$. This means that the distribution of the $Y_i(1)$ for treated units can be obtained from a (common) location shift from the distribution of $Y_i(0)$. In this case, for each unit $i$, the treatment effect would be the same across all realizations of the uncertainty that determine the potential outcomes, and homogeneous across units.

Under this (arguably strong) assumption, $\beta = \alpha$, and variability due treatment effect heterogeneity disappears from the distribution of $\hat \beta_{\text{DM}}$. Therefore, if we account for the distribution of $Y_i(0)$ for both treated and control units using one of the methods discussed in Section \ref{Sec: solutions for Y(0)}, then we can conduct valid inference.

For example, consider the case in which only $i=1$ is treated. Under treatment effect homogeneity and  Assumption \ref{ass_iid}, \cite{conley20211inference} note that, when $N_0 \rightarrow \infty$, $(\hat \beta_{\text{DM}} - \beta)  \overset{d}{\rightarrow}  (Y_1(0) - \mathbb{E}[Y_1(0)])$, where the distribution of $(Y_1(0) - \mathbb{E}[Y_1(0)])$ can be estimated using the control residuals to construct p-values that are asymptotically valid when $N_1$ is fixed and $N_0 \rightarrow \infty$. More specifically, if we set $\hat p_{CT} = \frac{1}{N} \sum_{i=1}^N \mathbf{1}\left\{ |\hat \beta_{\text{DM}}-c| \leq |\hat u_i | \right\}$, where $\hat u_i$ are the residuals, this would provide an asymptotically valid p-value (when $N_0 \rightarrow \infty$) for the null $H_0: \beta = c$.

\begin{example_contd}[\ref{example_leading}]
    In this example, treatment effect homogeneity would mean that the treatment  would have the same causal effect on agricultural outcomes  irrespectively of whether we had a positive or negative weather shock, which may be an unreasonable assumption in many settings. We can imagine settings in which the treatment effect would be stronger when farmers were hit by a negative shock, which would invalidate the assumption of treatment effect homogeneity.  
\end{example_contd}

\subsubsection{Tests of sharp null}

Another possibility consists of considering a different null hypothesis. Instead of testing a hypothesis regarding the ATT, $\beta$, we could test a sharp null \citep[e.g.][]{Chung2021,Bugni2018}. In this case, the researcher is interested in testing, for some $c \in \mathbb{R}$:

\begin{equation}
\label{eq_sharp}
H_0: \mathbb{P}\left[\alpha_i = c|D_1,\ldots, D_N\right] = 1, \quad \forall i =1,\ldots N, \text{ s.t. } D_i=1\, .
\end{equation}

This null implies that treatment effects on treated units are constant over repeated realizations of sampling uncertainty, and equal to some value $c$.  More commonly, by setting $c=0$ we have a null that treatment effect has no effect whatsoever, meaning that for (almost) every realization of uncertainty on potential outcomes, the treatment has no effect on any treated unit. 

Under the null \eqref{eq_sharp}, Equation \eqref{eq_decomposition} subsumes to:

$$\hat \beta - c =   \frac{1}{N_1}\sum_{i=1}^N D_i Y_i(0) - \frac{1}{N_0}\sum_{i=1}^N (1-D_i)Y_i(0)  \, .$$

Consequently, we can construct valid tests for the sharp null by leveraging methods that quantify the uncertainty in untreated potential outcomes (Section \ref{Sec: solutions for Y(0)}). For example, under Assumption \ref{ass_iid}, the statistic 

\begin{equation}
    \label{eq_pvalue}
\hat{p}_c  = \hat{G}(|\hat \beta - c|)\, ,
\end{equation}
where
$$\hat{G}(x) = \frac{1}{|\Pi|}\sum_{\pi \in \Pi} \mathbf{1}\left\{\left| \frac{1}{N_1}\sum_{i=1}^N D_{i} (Y_{\pi(i)} - c D_{\pi(i)} )- \frac{1}{N_0}\sum_{i=1}^N (1-D_i)(Y_{\pi(i)} - c D_{\pi(i)} ) \right|\leq x\right\}\, ,$$
and $\Pi$ is the set of permutations on $\{1,\ldots, N\}$, is a valid p-value for testing the sharp null. This property follows from standard results on randomization tests (see Appendix A of \citet{alvarez2025inferencetreatedunits} for details). When Assumption \ref{ass_iid} does not hold, it is not generally possible to construct exact tests. However, when other restrictions enabling extrapolation from the control group are assumed it can be possible to construct tests that are asymptotically valid as $N_0\rightarrow \infty$ \citep[e.g.][]{ferman2019inference}.

\begin{example_contd}[\ref{example_leading}]
    In our example, by setting $c=0$, we would be testing a null hypothesis that the technical assistance has no effect whatsoever on agricultural outcomes of the treated units, regardless of the realizations of the random variables that determine the potential outcomes. 
\end{example_contd}

\begin{remark}
The sharp null \eqref{eq_sharp} is related, albeit distinct, from the sharp nulls considered in design-based settings \citep[e.g.][]{Imbens_Rubin_2015,young_QJE}. In those settings, potential outcomes in the sample are usually viewed as fixed (nonrandom) quantities, uncertainty stems from assignment, and the null being tested is:

$$H_0: Y_i(0) = Y_i(1) + c\, ,\quad i=1,\ldots, N\, .$$
This is a null that imposes that the treatment effects for all units in the sample are homogeneous and constant to a value $c$. In contrast, in model-based settings, the sharp null usually refers to constant and homogeneous effects for the treated units. 
\end{remark}

\subsubsection{Inference on the realized treatment effect} 

A third alternative for inference in settings with few treated units consists in targeting the \emph{realized sample average treatment effect}, i.e. one is concerned in constructing a test statistic for the null

\begin{equation}
\label{eq_realized}
     H_0 : \frac{1}{N_1}\sum_{t=1}^T D_i\alpha_i = c
\end{equation}
that controls size \emph{conditionally on the realized treatment effects} $\mathcal{A} = \{\alpha_i: D_i=1\}$, i.e. one considers a decision rule $\tilde{\phi}_c$ with the property that $\mathbb{E}[\tilde{\phi}_c|D_1,\ldots, D_n, \mathcal{A}] \leq \gamma$ if \eqref{eq_realized} holds. We can interpret settings that consider $\alpha_i$ as non-stochastic but potentially different across $i$ \citep[e.g.][]{conley20211inference,carvalho2018arco,ferman2019inference, synthetic_did,alvarez2023inference,alvarez2023extensions,chernozhukov2024ttest} as implicitly considering a setting with stochastic treatment effects but conditioning on the realized effects. In this case, constructing a test that controls size can be seen as equivalent, in a setting where effects are stochastic, to construct a test that controls size conditionally on $\mathcal{A}$. 

This inferential focus is especially pertinent when the objective is to learn about the treatment effect in the particular context where the intervention occurred, rather than to extrapolate to other environments. Pragmatically, in the extreme case of a single treated unit, ATT inference becomes infeasible if treatment effects are stochastic, since we only observe one realization of the treated potential outcome. In this case, focusing on uncertainty quantification for the realized treatment effects observed in the sample may be more feasible.

For inference on realized effects to be valid, one requires that the assumptions discussed in Section \ref{Sec: solutions for Y(0)} are valid even when we condition on $\mathcal{A}$. As we discuss in more detail in Section \ref{sec_homogeneity_realized}, in many settings this would require relatively strong assumptions.

\begin{example_contd}[\ref{example_leading}]
     In our example, we may consider settings in which the technical assistance is only helpful when there is a negative weather shock, so the effect could be $\tau$ if there is a negative weather shock in farm $i$ (which happens with probability $\pi$) and zero otherwise. In this case, the ATT would be $\tau \times \pi$, but we would be drawing inference on either $\tau$ or $0$, depending on whether or not we had a negative weather shock. In this setting, inference on the realized treatment effect could be interesting if the goal is understanding the effects of that particular implementation of the program. In contrast, if the goal is generalizing to the expected effect across different scenarios, then the ATT would be a more interesting target parameter.   In settings with only a single treated unit, note that we only observe a realization of the treated outcome either when there was or when there was not a negative weather shock. Therefore, there would not be much hope in drawing inferences on the ATT, but we could have some hope in drawing inferences on the realized treatment effect. 
\end{example_contd}

\begin{example_contd}[\ref{example_leading}]
    As another illustration, we can consider settings in which the quality of the technical assistance is stochastic, and that generates variability of the causal effect of such treatment. In this case, we would be considering inference on the treatment effect conditional on the quality of the technical assistance. 
\end{example_contd}

\subsubsection{Constructing prediction sets}

\label{sec_prediction_sets}

Another alternative is to construct prediction sets, rather than confidence intervals. Prediction sets are set-valued statistics ${C}$ with the property that, with a given confidence $1- \gamma \in (0,1)$:

\begin{equation}
\label{eq_set_property}
   \mathbb{P}\left[\frac{1}{N_1}\sum_{i=1}^N D_i \alpha_i\in \mathcal{C}\Big| D_1,\ldots, D_N\right]\geq 1-\gamma    \, .
\end{equation}

Prediction sets are popular in the conformal prediction literature \citep{Lei2013}, and have been considered in causal settings with few treated units  by \cite{cattaneo2021prediction,cattaneo2023uncertainty} and \cite{chernozhukov2021exact}. It has been argued that such construction is especially useful in settings with substantial treatment effect heterogeneity -- and where the goal is to personalize treatment allocation --, in which case assessing uncertainty regarding individual treatment effects may be more relevant than focusing on average treatment effects from the sampled population \citep{Candes,kivaranovic2020conformal}.  

A set-valued function of the data satisfying \eqref{eq_set_property} has the property of containing the sample average treatment effect on the treated with probability at least $1-\gamma$, over repeated realizations of sampling uncertainty. If one had knowledge of the quantile function $Q_{\bar{Y}(0)|D_1,\ldots,D_N}$ of the conditional distribution of $\frac{1}{N_1}\sum_{i=1}^{N_1} D_i Y_i(0)$ given $D_1,\ldots, D_N$, then it would be possible to  take $\mathcal{C}$ as the interval:

$$\left[\frac{1}{N_1}\sum_{i=1}^N D_i Y_i -Q_{\bar{Y}(0)|D_1,\ldots,D_N}(1-\gamma/2), \frac{1}{N_1}\sum_{i=1}^N D_i Y_i -Q_{\bar{Y}(0)|D_1,\ldots,D_N}(\gamma/2)\right]\, .$$

Since the quantile function $Q_{\bar{Y}(0)|D_1,\ldots,D_N}$ is typically unknown, one approach to constructing prediction intervals is to focus on estimators of these quantities with good statistical properties \citep{cattaneo2021prediction}. We further discuss this type of construction in Section \ref{sec_general}.

\subsection{Connections between different approaches}In this paper, we establish several connections between the four alternative inference approaches described earlier. In this section, we present these connections for the simpler case of a difference-in-means estimator. Some of the connections rely on new results that we derive in a more general setting in Section \ref{sec_general}. Others are not new, but we believe their interpretation has been somewhat underappreciated.

\subsubsection{Connection between tests assuming treatment effect homogeneity and tests of sharp nulls}

It is easy to see that the  sharp null \eqref{eq_sharp} is equivalent to the conditions $\beta=c$ and $\mathbb{P}[\cap_{i: D_i=1}\{\alpha_i = \beta\}|D_1,\ldots, D_N]=1$, where the first term states that the ATT is equal to $c$, while the second term means that treatment effects are non-stochastic and homogeneous. Consequently, under the sharp null \eqref{eq_sharp}, we have that the assumptions considered in the methods discussed in  Section \ref{sec_homogeneous} for inference on the ATT are valid. 

In other words, if we have a test that is valid for inference on the ATT under the assumption that there is no treatment effect heterogeneity, then such test would also be valid for testing a sharp null. Conversely, if we have a test that is valid for inference on a sharp null, then this test would also be valid for inference on the ATT under the assumption of no treatment effect heterogeneity.

\subsubsection{Connection between tests assuming treatment effect homogeneity and inference on the realized treatment effects}\label{sec_homogeneity_realized}
Consider a test of the null $H_0 : \beta = c$ that is valid under the assumption that there is no treatment effect heterogeneity at the $\gamma$ significance level. Let the test statistic be $\hat \phi = g\left(\left\{Y_i,D_i\right\}_{i=1}^N\right)$, where $g$ is a map that is not a function of the data.\footnote{The map $g$ may depend on other random variables that are independent from the data, such as in the case of randomized decision rules.} Assume that this statistic satisfies the following condition:

\begin{condition}
\label{cond_struct_1}
    The test statistic $\hat \phi$ can be represented as $\hat{\phi} = h\left(\{Y_i(0),D_i\}_{i=1}^N, \frac{1}{N_1}\sum_{i=1}^{N} D_i \alpha_i\right)$, where $h$ is a map that is not a function of the data.
\end{condition}

 In other words, Condition \ref{cond_struct_1} restricts consideration to tests whose conclusions depend on the actual treatment effect heterogeneity only through the average of effects on the treated $\frac{1}{N_1}\sum_{i=1}^{N} D_i \alpha_i$. 
 
 We show that a test satisfying Condition \ref{cond_struct_1} that is valid under treatment effect homogeneity at the $\gamma$ significance level   would also be valid for inference on the realized treatment effect under assumptions that limit dependence between treatment effects and untreated potential outcomes. Specifically, under the assumption:

\begin{assumption}
    \label{ass_ind}  
$\mathcal{A} \text{ is independent of } \{Y_i(0)\}_{i=1}^n |D_1,\ldots, D_N\, ,$
\end{assumption}
\noindent the test is valid for testing the null $H_0: \frac{1}{N_1}\sum_{i=1}^{N_1}D_i \alpha_i=c$ conditionally on $\mathcal{A}$, since, on the event that the null holds, 

\begin{equation*}
    \begin{aligned}
        \mathbb{E}[\hat{\phi}|D_1,\ldots, D_N,\mathcal{A}] = \mathbb{E}[h(\{Y_i(0),D_i\}_{i=1}^N, c)|D_1,\ldots, D_N,\mathcal{A}] =\\
        \mathbb{E}[h(\{Y_i(0),D_i\}_{i=1}^N, c)|D_1,\ldots, D_N]\leq \gamma \, .
    \end{aligned}
\end{equation*}

The intuition for the above result is that, conditional on $\mathcal{A}$, we have a model with homogeneous treatment effects where the distribution of $\{Y_i(0)\}_{i=1}^n |D_1,\ldots, D_N$ satisfies the properties required for valid inference when there is no treatment effect heterogeneity.

Notice that Assumption  \ref{ass_ind} is potentially restrictive: in particular this assumption is not necessarily valid  even if we assume that $\{Y_i(0)\}_{i=1}^N$ is iid and  treatment is randomly assigned. In contrast, Assumption \ref{ass_iid} would be satisfied under these conditions.

Assumption \ref{ass_ind} is satisfied if we consider a model in which (stochastic) treatment effects $\mathcal{A} = \Gamma(U)$ are a function of unobserved variables $U$, while potential outcomes when untreated $\{Y_i(0)\}_{i=1}^n = \Lambda(V)$ is a function of unobserved variable $V$. In this case, Assumption \ref{ass_ind} means that the sources of variability that determine $\mathcal{A}$ are independent from the sources of variability that determine $\{Y_i(0)\}_{i=1}^n = \Lambda(V)$, that is, $U \perp V | D_1,...,D_N$.

Another alternative is that there is dependence between $\mathcal{A}$ and $\{Y_i(0)\}_{i=1}^n$ (conditionally on $D_1,\ldots, D_N$), but we have that the distribution $\{Y_i(0)\}_{i=1}^n |\mathcal{A},D_1,\ldots, D_N$ satisfy the assumptions necessary so that we can learn about $Y_i(0)$ of the treated using the outcomes from the controls, e.g. if  we replace Assumption \ref{ass_iid} with $Y_i(0)  \overset{iid}{\sim} F_{Y(0)|\mathcal{A},D_1,\ldots,D_N}$.

\begin{example_contd}[\ref{example_leading}]
    Considering again  our leading example, we can think that the independence between treatment effects and potential outcomes when untreated is reasonable when the heterogeneity in treatment effects come from variability in the quality of the institution implementing the treatment. However, this assumption would be less reasonable if the heterogeneity in treatment effects comes from variables that may be related to potential outcomes when untreated, such as weather shocks. The key point in this case is whether we have different weather shocks for each farm, or we have a single weather shock that affects all farms in  a similar way. In the first case, even if Assumption \ref{ass_iid} is valid unconditionally, it would be hard to justify that it would also be valid conditional on $\mathcal{A}$. In the second case, it should be more reasonable to assume that Assumption \ref{ass_iid} is reasonable conditional on $\mathcal{A}$, even if we have that the treatment effect heterogeneity is potentially related to the potential outcomes when untreated.
\end{example_contd}

\begin{example_contd}[\ref{example_leading}]
    Consider again the case treatment effect depends only observable whether shocks, which we denote by  $Z_i$. In this case, it might be reasonable to assume that Assumption \ref{ass_iid} is valid conditional on the treatment effect, if we also condition on $Z_i$. In this case, one would conduct inference on the realized treatment effect separately for each subsample defined by the values of $Z_i$.
\end{example_contd}

\begin{remark}
\label{rmk_studentize}
    Considering this difference-in-means setting, a test based on the statistic $\hat \beta_{DM}-c$ would satisfy Condition \ref{cond_struct_1}. This is the case of the procedures in  \cite{conley20211inference} and \cite{ferman2019inference}. In contrast, a test based on an  unequal variance t-statistic would not satisfy this condition, since the conclusion of the test would also depend on the dispersion of the $\alpha_i$. In this case, even though the test is valid for inference on the ATT $\beta$ under treatment effect homogeneity, it would not be generally valid, in settings with $N_1$ fixed, for inference on the realized treatment effect under heterogeneity, even if the independence Assumption \ref{ass_ind} holds. This fact has been noted elsewhere in the few treated literature \citep{Chaisemartin2022,dechaisemartin2024book,alvarez2025inferencetreatedunits}, even though it is recognized in these settings that, in standard asymptotic analyses where $N_1,N_0\to\infty$, studentized statistics are preferable due to their robustness to heterogeneity when the goal is to infer the ATT $\beta$.
\end{remark}

\begin{remark}
\label{rmk_det_hete}
Consider a setting where, in the language of \cite{alvarez2025inferencetreatedunits}, we have deterministic heterogeneous treatment effects, meaning that the $\alpha_i$ are fixed constants (over repeated samples), although possibly distinct across $i$. In this case, Assumption \ref{ass_ind} is trivially satisfied. As a consequence, the connections we establish in this section immediately extend to the case where treatment effect heterogeneity is deterministic, by casting inference on the ATT under deterministic heterogeneity as a special case of inference on the realized effect (where Assumption \ref{ass_ind} holds by construction). In particular, any test statistic satisfying Condition \ref{cond_struct_1} that produces a valid test for the ATT under treatment effect homogeneity will also be valid for inference on the ATT under deterministic treatment effect heterogeneity.
\end{remark}

\subsubsection{Connection between tests of sharp nulls and prediction sets} Tests of sharp nulls and prediction sets are intimately connected. As an example, for a given significance level $\gamma \in (0,1)$, it is possible to invert the  procedure given by the decision rule $\phi_c = \mathbf{1}\{\hat{p}_c \leq \gamma\}$, where $\hat{p}_c$ denotes the p-value in \eqref{eq_pvalue}, to construct a set $\mathcal{I} = \{c \in \mathbb{R}: \phi_c = 0\}$. Doing so will result in a {prediction set} that satisfies \eqref{eq_prediction_few_treated} when Assumption \ref{ass_iid} is satisfied.

More generally, in the next section, we show that,  any set-valued function satisfying \eqref{eq_set_property} defines a valid test of sharp null \eqref{eq_sharp}; and, conversely, for the types of test statistics typically considered in settings with few treated units, inverting a test of the sharp null \eqref{eq_sharp} will result in a prediction set satisfying \eqref{eq_set_property}. While unsurprising, this result seems to be unappreciated in the literature. Specifically, it provides a rationale for reporting prediction sets of treatment effects in settings with few treated units, as doing so is equivalent to reporting those values for which a test of a sharp null is not rejected by the data. Finally, we note that the class of tests whose inversion leads to valid prediction intervals is quite large. Indeed,  consider a valid family of tests of sharp nulls $\{\hat{\phi}_c\}_{c \in \mathbb{R}}$ that satisfy the following condition:

\begin{condition}
    \label{cond_struct_2}
    The family of tests of sharp nulls $\{\hat{\phi}_c\}_{c \in \mathbb{R}}$ may be represented as $$\hat{\phi}_c = v\left(\{Y_i(0),D_i\}_{i=1}^N, \frac{1}{N_1}\sum_{i=1}^{N_1}D_i \alpha_i-c\right),\quad \forall c \in \mathbb{R}\, ,$$ where $v$ is not a function of the data.
\end{condition}

In this case, our results show that inversion of this family of tests will result in valid prediction sets. Condition  \ref{cond_struct_2} requires that the conclusion of the tests $\hat{\phi}_c$ depend on the in-sample average treatment on the treated $\frac{1}{N_1}\sum_{i=1}^{N_1}D_i \alpha_i$ and the null value $c$ \textit{only through} the difference $\frac{1}{N_1}\sum_{i=1}^{N_1}D_i \alpha_i-c$. Notice that, if we have a family of tests that satisfy Condition \ref{cond_struct_2}, then each individual test has the structure in Condition \ref{cond_struct_1}.

\subsubsection{Connection between  prediction intervals and inference on realized treatment effects}  \label{sec_pred_realized}

If we have a test that is valid for inference on the realized treatment effects, then it is clear that it would also be valid to construct prediction intervals. The intuition is that a confidence interval for the realized treatment effect would have valid coverage conditional on the treatment effect, so when we integrate over the distribution of treatment effects it would lead to a valid prediction interval. 

The other direction, however, is more tricky. It might be that we have a valid prediction interval, but it does not lead to valid confidence intervals for the realized treatment effects (that is, once we condition on the treatment effects). The reason is that we may have some realizations of the treatment effects in which the prediction interval has conditional (on treatment effects) undercoverage, which is compensated by other realizations of the treatment effects in which we have over-coverage. In Appendix \ref{example_failure} we provide a simple example in which this may happen (see also \cite{CWZ2023_comment} for a related discussion). 

Therefore, we do not have a direct equivalence between prediction intervals and inference on the realized treatment effects. Still, we show that the following result is valid:  for the types of test statistics typically considered in the few treated literature,\footnote{Specifically, we consider inference methods based on statistics of the form $\frac{1}{N_1}\sum_{i=1}^{N_1} D_i (Y_i-\hat M_i)$, where $\hat{M}_i$ is a consistent estimator of a proxy $\hat M_i$ of the untreated potential outcome $Y_i(0)$. These cover several methods in the few treated literature. See Section \ref{eq_specialized} for further details.} if we have a test that is valid for the realized treatment effect under the assumption that the treatment effects are independent of untreated potential outcomes (Assumption \ref{ass_iid}), then inversion of this test will also produce a valid prediction interval, even when we allow for arbitrary dependence between the stochastic treatment effects and the potential outcomes when untreated.

Put another way, our results provide conditions ensuring that, if one has a consistent estimator of the distribution of $\frac{1}{N_1}\sum_{i=1}^N D_i Y_i(0)$ (conditionally on assignments), then it is possible to construct an interval that is simultaneously an (asymptotically) valid prediction interval {with no restriction on the dependence between treatment effects and untreated potential outcomes}, as well as an (asymptotically) valid confidence interval for the realized treatment effects under the additional assumption of independence between treatment effects and untreated potential outcomes. For example, we would be able to construct a consistent estimator of the distribution of $\frac{1}{N_1}\sum_{i=1}^N D_i Y_i(0)$ (conditionally on assignments) if $Y_i(0)$ has the same marginal distribution across $i=1,...,N$ and $\{Y_i(0)\}_{i:D_=0}$ is weakly dependent \citep{alvarez2023inference,conley20211inference}. 

\begin{example_contd}[\ref{example_leading}]
    In our empirical illustration, imagine that we have weather shocks that may affect agricultural outputs. If we are in a setting in which all farms are in the same geographical region, then we should expect  weather shocks that affect all units {in a similar way}, so the assumption for the connection between prediction intervals and inference on realized treatment effects would not be valid,  because in this case it would be difficult to find a consistent estimator for the conditional-on-assignment distribution of $\frac{1}{N_1}\sum_{i=1}^N D_i Y_i(0)$. In contrast, this assumption would be more reasonable if the farms are geographically more distant, so they may experience different weather shocks.   
\end{example_contd}

\begin{remark}
    Importantly, while a consistent estimator for the conditional-on-assignment distribution of $\frac{1}{N_1}\sum_{i=1}^N D_i Y_i(0)$ guarantees valid prediction intervals without further assumptions and valid inference on the realized treatment effects under Assumption \ref{ass_ind}, there may be settings where valid inference on the realized treatment effect -- and consequently valid prediction intervals -- may be plausible even though a consistent estimator of the  conditional-on-assignment distribution of  $\frac{1}{N_1}\sum_{i=1}^N D_i Y_i(0)$ is unavailable. For example, consider a setting in which we have dependence between $\{Y_i(0)\}_{i=1}^N$ and $\mathcal{A}$, but we have that Assumption \ref{ass_iid} is valid conditional on $\mathcal{A}$. In this case, inference on the realized treatment effect would be valid (and, therefore, we would also have valid prediction intervals), while we do not necessarily have a consistent estimator for the distribution of $\frac{1}{N_1}\sum_{i=1}^N D_i Y_i(0)$.
\end{remark}

\begin{remark}
Relating to the discussion of deterministic heterogeneity in Remark \ref{rmk_det_hete},  the results in this section imply that, for the types of constructions typically employed in the few treated literature, confidence intervals for $\beta$ derived under deterministic treatment effect heterogeneity can be reinterpreted as valid prediction intervals for $\frac{1}{N_1}\sum_{i=1}^{N_1} D_i \alpha_i$, in a setting where treatment effects exhibit stochastic heterogeneity. This type of connection has also been noted by \citet{chernozhukov2025debiasingttestssyntheticcontrol} in a setting where proxies for missing potential outcomes are constructed through the synthetic control method and the goal is to conduct inference on average effects.
\end{remark}

\subsubsection{Connection between  sharp nulls and inference on realized treatment effects}  

Since we show that sharp nulls and prediction intervals are intimately connected, this means that we also have a connection between tests of sharp nulls and inference on the realized treatment effects. {Indeed, note that inference on the realized treatment effects under Assumptions \ref{ass_iid} and \ref{ass_ind} will result in \emph{exactly the same} p-value $\hat{p}_c$ given in \eqref{eq_pvalue} as the test of sharp nulls under Assumption \ref{ass_iid}. Therefore, the decision rule to reject the null if $\hat{p}_c$ is below some significance value produces a test of the sharp null \eqref{eq_sharp} under Assumption \ref{ass_iid}, and a test on realized average affects \eqref{eq_realized} under Assumptions \ref{ass_iid} and \ref{ass_ind}. As we show in the next section, this is a more general feature of tests designed for settings with few treated units: under some assumptions, a test that is asymptotically valid for testing the sharp null \eqref{eq_sharp}, is also valid for testing \eqref{eq_realized} under the Assumption \ref{ass_ind} that restricts the relation between treatment effects and untreated potential outcomes.}

\section{General results}
\label{sec_general}
    
    \subsection{Setting} 
    \label{general_setting}
    Fix a probability space $(\Omega, \Sigma, P)$. We consider a setting with $N_1$ units of interest, indexed by $i=1,\ldots N_1$. For each of these units, we define potential outcomes $Y_i(0)$ and $Y_i(1)$ corresponding to the thought experiment of manipulating a binary treatment over an outcome $Y$ of interest. In line with the model-based setting of earlier sections, we view $(Y_i(0),Y_i(1))$ as random variables defined on $(\Omega, \Sigma, P)$. Uncertainty on potential outcomes may reflect a sampling process from a population or, in those settings where such interpretation is not warranted, economic uncertainty that determines potential outcomes.
   
   {We remain agnostic about the interpretation of the unit index $i$: it may reflect different individuals (or counties, states etc.), or the same individual (or counties, states etc.) at different points in time, or reflecting both different individuals over different time periods}. This allows us to establish connection between inference approaches in both settings where the target are averages of effects across individuals and/or across the time. The causal effects of the policy on each unit are denoted by the random variables $\alpha_i=Y_i(1)-Y_i(0)$, $i=1$.

   We assume that the econometrician observes a sample $\{Y_i,X_i\}_{i=1}^{N_1} \cup \{\xi\}$ of random variables defined on $(\Omega, \Sigma, P)$, where $Y_i = Y_i(1)$, $i=1,\ldots N_1$. In other words, the econometrician observes the outcomes of the $N_1$ units under the treatment regime. The random variables $\{X_i\}_{i=1}^{N_1}$ denote a set of unit-level covariates used in the analysis, while $\xi$ denotes auxiliary data used in the imputation of the missing potential outcomes $Y_i(0)$ that is always required, either implicitly or explicitly, by any causal inference approach \citep[see][for a discussion]{athey2021matrix}. The auxiliary data $\xi$ can include past values of  the outcome $Y$ for the $N_1$ units -- when the assumptions underlying the causal inference approach enable extrapolation of the missing potential outcome from the past --, the values of $Y$ for a group of controls -- as in methods exploiting cross-sectional variation in assignment -- or a combination of both -- as in panel data methods such as differences-in-differences or the synthetic control method.

   In what follows, we denote the push-through probability measure of $(Y_i(0),Y_i(1),X_i)$ by $\mathbb{P}_i$. Expectations with respect to a probability law $Q$ is denoted by $E_Q$.

	\subsection{Three types of inferential targets}

 We now formalize, in the setup described in Section \ref{general_setting}, the alternative inferential targets introduced in Section \ref{leading_example}. We begin with a definition of a test of a sharp null.  In this section, we focus on definitions that assume finite-sample validity of the different inferential approaches, though it is immediate to extend these to asymptotic sequences where validity is only approximate. Asymptotic definitions are discussed in  Section \ref{eq_specialized}.

	\begin{definition}[Test of sharp null]
		Fix $c \in \mathbb{R}$. A test of the sharp null:
		
		$$H_0: \mathbb{P}_i[\alpha=c]=1\, ,\quad \forall i=1,\ldots,N_1\, ,$$
		at significance level $\gamma$ is a function of the sample $\phi_c$ taking values on $[0,1]$ such that:
		
		$$E_P[\phi_c]\leq \gamma \, ,$$
		whenever $H_0$ is true.
	\end{definition}
	
In our setting, a sharp null is a hypothesis on \textbf{both} the level of the treatment effects on the treated units, as well as on the absence of treatment effect heterogeneity across these units.
Notice that, when uncertainty stems from a sampling process from a well defined population, the sharp null implies that treatment effects are constant and homogeneous in the (treated) population from which the sample is drawn.
    
	\begin{definition}
		A $(1-\gamma)$-prediction set for the sample average treatment effect on the treated is a function of the sample $\mathcal{C}$, taking values on the subsets of $\mathbb{R}$, such that:
		
		$$P\left[\frac{1}{N_1}\sum_{i=1}^{N_1} \alpha_i \in \mathcal{C}\right]\geq 1-\gamma$$
	\end{definition}
	A prediction set is a function of the sample that, over repeated samples, contains the (possibly stochastic) sample average treatment effect in at least $100(1-\gamma) \%$ of the cases.

		The third type of inference concerns inference on \textbf{the realized sample average effect on the treated}.
        
			\begin{definition}
			A $(1-\gamma)$-confidence set for the realized sample average treatment effect on the treated is a function of the sample $\mathcal{I}$, taking values on the subsets of $\mathbb{R}$, such that:
			
			$$P\left[\frac{1}{N_1}\sum_{i=1}^{N_1} \alpha_i \in \mathcal{I}\Big|\alpha_{1}, \alpha_{2}, \ldots, \alpha_{N_1}\right]\geq 1-\gamma \, , \quad P-a.s.$$
		\end{definition}
		Clearly, by iterated expectations,  any valid confidence set for the realized effect is a valid prediction set for the sample average effect. We note, however, that, given its conditonal nature, construction of confidence sets for realized effects will typically require assumptions on the relation between treatment effects $\alpha$ and untreated outcomes $Y(0)$ in the treated population that may be hard to justify in practice. In the next sections, we show that inference procedures devised for inference on the realized effect may be reinterpreted as confidence sets for the sample average effect under a different set of assumptions that may be more palatable to applied researchers.
		\subsection{Relating tests of sharp nulls and prediction sets}
In this section, we lay out results that connect tests of sharp nulls and prediction sets. These results hold more generally, and are not restricted to few-treated-asymptotics. We discuss connections under a few-treated asymptotic sequence in the next section.

        The first result shows that prediction intervals always yield tests of sharp nulls.
	
	\begin{lemma}
		\label{lemma_imply}
		Let $\mathcal{C}$ be a valid $(1-\gamma)$ prediction set. We then have that, for every $c \in \mathbb{R}$, the decision rule $\phi_c = \mathbf{1}\{c \notin \mathcal{C}\}$ yields a valid test for the sharp null $\mathbb{P}_i[\alpha_i = c]=1, \ \forall i=1\ldots, N_1$, at significance level $\gamma$.
		\begin{proof}
			Under the null $P[\cap_{i=1}^{N_1}\{\alpha_i = c\}] = 1$:
			
			$$\mathbb{E}_P[\mathbf{1}\{c \notin \mathcal{C}\}] = P\left[\frac{1}{N_1}\sum_{i=1}^{N_1} \alpha_i \notin \mathcal{C}\right] \leq \gamma \, .$$
		\end{proof}
		\end{lemma}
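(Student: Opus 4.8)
The plan is to exploit the fact that, under the sharp null, the sample average treatment effect degenerates to the single value $c$, so that non-coverage of the point $c$ by the prediction set $\mathcal{C}$ coincides (almost surely) with non-coverage of the random sample average. This reduces the required size bound directly to the defining coverage property of $\mathcal{C}$, so no new analytic machinery is needed.

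First I would translate the hypothesis into a statement about the joint law. The sharp null asserts $\mathbb{P}_i[\alpha_i = c] = 1$ separately for each $i = 1,\ldots, N_1$, i.e. each marginal event $\{\alpha_i = c\}$ has probability one under $P$. Since there are only finitely many units, the intersection $\cap_{i=1}^{N_1}\{\alpha_i = c\}$ is a finite intersection of probability-one events and hence itself has $P$-probability one. On this almost-sure event the sample average satisfies $\frac{1}{N_1}\sum_{i=1}^{N_1}\alpha_i = c$.

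Next I would rewrite the expected value of the decision rule. Because $\phi_c = \mathbf{1}\{c \notin \mathcal{C}\}$ is a $\{0,1\}$-valued indicator, $E_P[\phi_c] = P[c \notin \mathcal{C}]$. On the probability-one event just identified we have $c = \frac{1}{N_1}\sum_{i=1}^{N_1}\alpha_i$, so the events $\{c \notin \mathcal{C}\}$ and $\{\frac{1}{N_1}\sum_{i=1}^{N_1}\alpha_i \notin \mathcal{C}\}$ differ only on a $P$-null set and therefore carry the same probability. Hence $E_P[\phi_c] = P[\frac{1}{N_1}\sum_{i=1}^{N_1}\alpha_i \notin \mathcal{C}]$.

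Finally I would invoke the validity of $\mathcal{C}$ as a $(1-\gamma)$ prediction set, which by definition gives $P[\frac{1}{N_1}\sum_{i=1}^{N_1}\alpha_i \in \mathcal{C}] \geq 1-\gamma$; passing to complements yields $E_P[\phi_c] \leq \gamma$, which is exactly validity of the test at level $\gamma$. There is essentially no analytic obstacle here, as the argument is a one-line manipulation. The only point requiring minor care is the passage from the per-unit probability-one statements to the joint event $\cap_i\{\alpha_i = c\}$ having probability one—which relies on finiteness of $N_1$ (equivalently, countable subadditivity)—together with the observation that, under the null, coverage of the degenerate sample average is literally coverage of the point $c$.
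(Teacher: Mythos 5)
Your proof is correct and follows essentially the same route as the paper's: under the sharp null the sample average treatment effect equals $c$ almost surely, so $E_P[\phi_c] = P\left[\frac{1}{N_1}\sum_{i=1}^{N_1}\alpha_i \notin \mathcal{C}\right] \leq \gamma$ by the defining coverage property of the prediction set. The only difference is that you spell out the (routine) steps the paper leaves implicit, namely that finitely many probability-one marginal events yield a probability-one joint event and that events differing on a null set have equal probability.
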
  
		
		The previous result is straightforward, though its interpretation may be novel. The result shows that the procedure of checking whether a value is within a prediction set for the sample average treatment effect may be seen as a test of a sharp null. 
        
        In what follows, we provide a partial converse to the above result. For that, we consider an analog of  Condition \ref{cond_struct_2} for this more general setting.

        \begin{condition}
        \label{cond_struct_general}
             Consider a family of decision rules $\{\phi_c: c\in \mathbb{R}\}$ with the property that, for some function $h$ that does not depend on the data, the tests can be represented as, for every $c \in \mathbb{R}$:
              $$\phi_c = h\left(\frac{1}{N_1}\sum_{i=1}^{N_1} \alpha_i - c, \{Y_i(0),X_i\}_{i=1}^{N_1} \cup \{\xi\}\right).$$
        \end{condition}

        For this class of decision rules, we show that the inversion of a testing procedure of sharp nulls generates a valid prediction interval.
        
        \begin{lemma}
        \label{lemma_partial_converse}
             Consider a family of decision rules $\{\phi_c: c\in \mathbb{R}\}$ that satisfy Condition \ref{cond_struct_general}.
          If $\mathbb{E}_P[h\left(0, \{Y_i(0),X_i\}_{i=1}^{N_1} \cup \{\xi\}\right)] \leq \gamma$, we then have that:

              \begin{enumerate}
                  \item $\{\phi_c: c\in \mathbb{R}\}$ is a family of valid tests for the sharp nulls $P[\cap_{i=1}^{N_1} \{\alpha_i=c\}]=1$, $c \in \mathbb{R}$, at the $\gamma$ significance level.
                  \item The region $\mathcal{C}=\{c \in \mathbb{R}: \phi_c = 0\}$ is a valid $(1-\gamma)$ prediction region for $\frac{1}{N_1}\sum_{i=1}^{N_1} \alpha_i$.
              \end{enumerate}
              \begin{proof}
                (1) Fix $c \in \mathbb{R}$. Under the null $P[\cap_{i=1}^{N_1} \{\alpha_i=c\}]=1$, $\phi_c = h\left(0, \{Y_i(0),X_i\}_{i=1}^{N_1} \cup \{\xi\}\right)$ a.s. Consequently, under such null, $E_P[\phi_c] = \mathbb{E}_[\hat{\psi}(0)]\leq \gamma$. (2) $P\left[\frac{1}{N_1}\sum_{i=1}^{N_1} \alpha_i \in  \mathcal{C}\right] = P\left[\phi_{\frac{1}{N_1}\sum_{i=1}^{N_1} \alpha^1_i } = 0\right]\geq 1-\gamma$.

              \end{proof}
        \end{lemma}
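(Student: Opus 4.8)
The plan is to deduce both parts of the statement from a single structural observation encoded in Condition~\ref{cond_struct_general}. Writing $\bar\alpha \coloneqq \frac{1}{N_1}\sum_{i=1}^{N_1}\alpha_i$ and collecting the ``null-distribution'' inputs into $W \coloneqq \{Y_i(0),X_i\}_{i=1}^{N_1} \cup \{\xi\}$, the representation reads $\phi_c = h(\bar\alpha - c, W)$, so each member of the family depends on the treatment effects and the null value only through the shift $\bar\alpha - c$. The entire argument then amounts to exploiting the fact that this shift can be forced to vanish in two different ways: by \emph{imposing} the null (Part 1), or by \emph{evaluating} the family at the random argument $c = \bar\alpha$ (Part 2).

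For Part (1), fix $c$ and work on the event where the sharp null $P[\cap_{i=1}^{N_1}\{\alpha_i = c\}]=1$ holds. Then $\alpha_i = c$ almost surely for every $i$, hence $\bar\alpha = c$ a.s., the first argument of $h$ equals $0$ a.s., and $\phi_c = h(0,W)$ a.s. Taking expectations and invoking the hypothesis $\mathbb{E}_P[h(0,W)] \leq \gamma$ gives $\mathbb{E}_P[\phi_c] \leq \gamma$, which is exactly level-$\gamma$ validity. This step is immediate once the substitution is made; the only thing to record is that the sharp null is precisely the condition that collapses $\bar\alpha - c$ to zero.

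For Part (2), the key idea is to substitute the \emph{random} value $c = \bar\alpha$ into the family. Since $\mathcal{C} = \{c : \phi_c = 0\}$, the coverage event $\{\bar\alpha \in \mathcal{C}\}$ coincides with $\{\phi_{\bar\alpha} = 0\}$; and by Condition~\ref{cond_struct_general}, $\phi_{\bar\alpha} = h(\bar\alpha - \bar\alpha, W) = h(0,W)$, which is the same pivotal object that appeared under the null in Part (1), now free of any dependence on the unknown realization of $\bar\alpha$. Consequently $P[\bar\alpha \in \mathcal{C}] = P[h(0,W) = 0]$, and it remains to bound this below by $1-\gamma$. Along the way I would check the routine measurability facts that $\mathcal{C}$ is a (random) Borel set and that $\omega \mapsto \phi_{\bar\alpha(\omega)}(\omega)$ is a genuine random variable, both of which follow because $h$ is a fixed map not depending on the data.

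The main obstacle is the final passage from the \emph{mean} bound $\mathbb{E}_P[h(0,W)] \leq \gamma$ to the \emph{probability} bound $P[h(0,W) = 0] \geq 1-\gamma$. This implication is valid without further work precisely when $h(0,W)$ is $\{0,1\}$-valued, i.e.\ for non-randomized tests, in which case $\mathbb{E}_P[h(0,W)] = P[h(0,W) = 1] \leq \gamma$ yields the claim directly. This is the relevant regime for the few-treated procedures at hand, where $\phi_c = \mathbf{1}\{\hat{p}_c \leq \gamma\}$, so I would state the result for $\{0,1\}$-valued $h$ (or simply restrict to non-randomized rules). For genuinely randomized tests the equality-based inversion $\{c : \phi_c = 0\}$ no longer captures the full size budget, and one would instead have to define the prediction set through the realized randomization draw to preserve the $1-\gamma$ guarantee; I would flag this as the one place where the otherwise purely structural argument meets a real constraint.
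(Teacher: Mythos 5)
Your argument is correct and is essentially the paper's own proof: part (1) substitutes the null into the representation $\phi_c = h\bigl(\tfrac{1}{N_1}\sum_{i=1}^{N_1}\alpha_i - c,\, W\bigr)$, where $W = \{Y_i(0),X_i\}_{i=1}^{N_1}\cup\{\xi\}$, and part (2) evaluates the family at the random argument $c = \tfrac{1}{N_1}\sum_{i=1}^{N_1}\alpha_i$, so that the coverage event becomes $\{h(0,W)=0\}$. Where you go beyond the paper is the final step, the passage from $\mathbb{E}_P[h(0,W)]\leq\gamma$ to $P[h(0,W)=0]\geq 1-\gamma$: the paper's proof simply asserts this inequality, and you are right that it requires $h(0,W)$ to be $\{0,1\}$-valued. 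The point is substantive rather than pedantic, since the paper's definition of a test allows $[0,1]$-valued (randomized) rules: the constant rule $\phi_c\equiv\gamma$ satisfies Condition~\ref{cond_struct_general} and the hypothesis $\mathbb{E}_P[h(0,W)]\leq\gamma$, yet it inverts to $\mathcal{C}=\emptyset$, which covers nothing, so conclusion (2) fails without your restriction. Restricting to non-randomized rules (equivalently, indicator-valued $h$) is exactly what the intended applications satisfy --- e.g.\ $\phi_c=\mathbf{1}\{\hat p_c\leq\gamma\}$ built from \eqref{eq_pvalue}, or the rule \eqref{eq_struct} --- and your closing remark that genuinely randomized tests would require defining the prediction set through the realized randomization draw identifies the correct fix for the general case.
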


        The previous lemma shows that, for a rather general class of tests of sharp nulls, the inversion of these tests produces a valid prediction region for the sample average treatment effect on the treated $\frac{1}{N_1}\sum_{i=1}^{N_1} \alpha_i$. The assumption, spelled as Condition \ref{cond_struct_general}, that the decision rule depends on $\frac{1}{N_1}\sum_{i=1}^{N_1} \alpha_i$ and the hypothesized value $c$ only through the difference  $\frac{1}{N_1}\sum_{i=1}^{N_1} \alpha_i -c$ cannot be generally dispensed with. Indeed, in Appendix \ref{app_counter_equivalence}, we provide a simple example of a valid family of tests of sharp nulls that does not possess this structure and whose inversion does not produce a valid prediction interval for the average effect. Following the same discussion as in Remark \ref{rmk_studentize}, we also note that Condition \ref{cond_struct_general} would not be satisfied if we consider an unequal variance t-test.

        In the next section, we show that a broad class of tests employed in the few treated literature enjoys the structure in the statement of Lemma \ref{lemma_partial_converse}. As a consequence, these tests can be used both as (asymptotically) valid tests of sharp nulls, and may as well be inverted to construct (asymptotically) valid prediction sets for the sample average effect on the treated.

		\subsection{Conducting inference in few treated settings}
		\label{eq_specialized}
		\subsubsection{Environment} In this section, we specialize the structure of our problem to the one typically adopted in few treated settings. Specifically, we follow \cite{alvarez2025inferencetreatedunits} and decompose the unobserved potential outcome of treated observations as:
		
		$$Y_{i}(0) = M_i + \epsilon_i, \quad i=1,\ldots N_1 \, ,$$
		where $M_i$ is a proxy for the untreated potential outcome for the treated observations. Let $\hat{M}_i$ be an estimator for this proxy. The average effect may then be estimated as:
		
		$$\widehat{\boldsymbol{\alpha}} = \frac{1}{N_1} \sum_{i=1}^{N_1} (Y_i - \hat{M}_i)\, .$$
		
		In what follows, we shall assume that the estimators $\hat{M}_i$ adopted by the researcher are consistent in an asymptotic sequence where the number of treated units is fixed. Importantly, we shall remain agnostic on whether the estimator relies on a large time series or on a large number of controls (or both). This allows us to cover a wide range of methods.\footnote{See \cite{chernozhukov2021exact} and \cite{alvarez2025inferencetreatedunits} for examples of proxies that rely on time series or cross-sectional variation or both.}
        
        To formally state asymptotic inference results, we embed our setting in an asymptotic framework indexed by $s \in \mathbb{N}$, and consider the behavior of the inference procedures as $s \to \infty$. We allow both the sample law, now indexed by $s$ and denoted by $P_s$, as well as the dimension of the vector of auxiliary random variables, denoted by $\xi_s$, to vary with $s$. Doing so allows us to capture settings where consistent estimation of the proxies relies on a large number of pre-treatment variables or on a large number of controls (or both).\footnote{Moreover, by allowing the sample law to vary with $s$, it is possible, if one considers arbitrary sequences of laws $P_s \in \mathcal{P}_s$ in a sequence of spaces $\mathcal{P}_s$, $s \in \mathbb{N}$, to obtain uniform-in-law asymptotic coverage results \citep{Canay2017,Andrews2020}.}  In keeping with the few treated literature, we consider an asymptotic sequence where $N_1$ is fixed. 

        The assumption of consistent estimation of the proxies can be stated in our asymptotic framework as follows:
		
		\begin{assumption}
  \label{ass_cons_proxy}
			For each $i=1,\ldots, N_1$ and every $\epsilon > 0$:
			
			$$\lim_{s \to \infty} P_s[|\hat{M}_{i} -{M}_{i}|> \epsilon] = 0\, .$$
			\end{assumption} 
        
    \subsubsection{Constructing asymptotically valid tests of sharp nulls, prediction intervals and confidence intervals for the realized effect}  Notice that, under an asymptotic sequence that satisfies Assumption \ref{ass_cons_proxy}, we have that:
		
		 $$\widehat{\boldsymbol{\alpha}}  - \frac{1}{N_1}\sum_{i=1}^{N_1}\alpha_i = \frac{1}{N_1}\sum_{i=1}^{N_1}\epsilon_i + o_{P_s}{(1)} \, .$$
		
        Consequently, for asymptotically valid tests of sharp nulls, it suffices to approximate the distribution of the $\epsilon_i$.\footnote{Following \cite{cattaneo2021prediction}, we can also consider improvements to asymptotically valid inference methods that explicitly take into account the $o_{P_s}(1)$ estimation error of the proxy. Given that these constructions offer asymptotically vanishing improvements, they do not alter the main connections established in this section.} For example, let $u \mapsto \hat{Q}_s(u)$ be a pointwise consistent estimator of the quantile function $Q_s$ of the distribution of $\frac{1}{N_1}\sum_{i=1}^{N_1}\epsilon_i$, meaning that, for each $u \in (0,1)$:
        $$|\hat{Q}_s(u)-Q_s(u)|\overset{P_s}{\to}0\, .$$
    
        For testing the sharp null $P_s[\cap_{i=1}^{N_1}\{\alpha_i = c\}] = 1$ at significance level $\gamma$, the researcher could consider the test function:
		 \begin{equation}
		 	\label{eq_struct}
		 	\phi_c = \mathbf{1}\{\widehat{\boldsymbol{\alpha}} - c \leq \hat{Q}_s(\gamma/2)\}+ \mathbf{1}\{\widehat{\boldsymbol{\alpha}} - c \geq \hat{Q}_s(1-\gamma/2)\}\, .
		 \end{equation} 
         
		 This decision rule produces an asymptotic size $\gamma$-test of the sharp null if the distribution of  $\frac{1}{N_1}\sum_{i=1}^{N_1}\epsilon_i$ is continuous, by which we mean that:

         $$\lim_{s \to \infty} P_s[\phi_c] =  \gamma\, ,$$
         for a sequence of laws $(P_s)_s$ where the null holds (see the proof of the lemma below).
		 
		 The next result shows that, for tests with the structure \eqref{eq_struct}, test inversion produces an asymptotically valid prediction set under the same set of assumptions that justify the test.
		 
		 \begin{lemma}
         \label{lemma_equiv_few}
		 	Suppose that Assumption \ref{ass_cons_proxy} holds. Let $u \mapsto \hat{Q}_s(u)$ be a pointwise consistent estimator of the quantile function $Q_s$ of the distribution of the $\frac{1}{N_1}\sum_{i=1}^{N_1}\epsilon_i$, and consider the decision rule given by \eqref{eq_struct}. Define the set $\mathcal{B} = \{c \in \mathbb{R}: \phi_c = 0\}$. We then have that:
		 	\begin{equation}
		 		\label{eq_prediction_few_treated}
		 		\mathcal{B} = \left[\widehat{\boldsymbol{\alpha}}-\hat{Q}_s(1-\gamma/2), \widehat{\boldsymbol{\alpha}}-\hat{Q}_s(\gamma/2)\right] 
		 	\end{equation}
	 is an asymptotically valid $(1-\gamma)$-prediction set  if the distribution of $\frac{1}{N_1}\sum_{i=1}^{N_1}\epsilon_i$ is continuous, meaning that:

    $$\lim_{s \to \infty} P_s\left[\frac{1}{N_1}\sum_{i=1}^{N_1}\alpha_i \in \mathcal{B}\right] = 1-\gamma\, .$$
\begin{proof}
    See Appendix \ref{proof_lemma_equiv_few}.
\end{proof} 
    
		 \end{lemma}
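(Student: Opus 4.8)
The plan is to prove the two claims in turn: first the exact algebraic identity \eqref{eq_prediction_few_treated}, and then the asymptotic coverage. For the first, I would simply unpack the event $\{\phi_c = 0\}$. Since the rule in \eqref{eq_struct} sets $\phi_c = 1$ as soon as $\widehat{\boldsymbol{\alpha}} - c \le \hat{Q}_s(\gamma/2)$ or $\widehat{\boldsymbol{\alpha}} - c \ge \hat{Q}_s(1-\gamma/2)$, we have $\phi_c = 0$ exactly when $\hat{Q}_s(\gamma/2) < \widehat{\boldsymbol{\alpha}} - c < \hat{Q}_s(1-\gamma/2)$, i.e.\ when $\widehat{\boldsymbol{\alpha}} - \hat{Q}_s(1-\gamma/2) < c < \widehat{\boldsymbol{\alpha}} - \hat{Q}_s(\gamma/2)$. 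Thus $\mathcal{B}$ is the open interval with the stated endpoints; because the continuity hypothesis renders the endpoints a null event, whether they are included is immaterial for the coverage, and I would record $\mathcal{B}$ as the interval in \eqref{eq_prediction_few_treated}.

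Next I would reduce the coverage statement to a statement about the noise average alone. Writing $\bar\alpha = \frac1{N_1}\sum_{i=1}^{N_1}\alpha_i$ and $\bar\epsilon = \frac1{N_1}\sum_{i=1}^{N_1}\epsilon_i$, the event $\bar\alpha \in \mathcal{B}$ is equivalent to $\hat{Q}_s(\gamma/2) < \widehat{\boldsymbol{\alpha}} - \bar\alpha < \hat{Q}_s(1-\gamma/2)$. By Assumption \ref{ass_cons_proxy} and the decomposition noted just before the lemma, $\widehat{\boldsymbol{\alpha}} - \bar\alpha = \bar\epsilon + r_s$ with $r_s = o_{P_s}(1)$, so the coverage probability equals $P_s[\hat{Q}_s(\gamma/2) - r_s < \bar\epsilon < \hat{Q}_s(1-\gamma/2) - r_s]$. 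The key structural point is that this compares $\bar\epsilon$ against consistent estimates of its \emph{own} quantiles: $\hat{Q}_s(u) - r_s = Q_s(u) + o_{P_s}(1)$ for $u \in \{\gamma/2, 1-\gamma/2\}$.

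The core of the argument is then a Slutsky-type comparison with the oracle thresholds $Q_s(\gamma/2)$ and $Q_s(1-\gamma/2)$. Continuity of the law of $\bar\epsilon$ gives $F_s(Q_s(u)) = u$, hence $P_s[Q_s(\gamma/2) < \bar\epsilon < Q_s(1-\gamma/2)] = (1-\gamma/2) - \gamma/2 = 1-\gamma$ exactly. I would then sandwich the actual event between oracle events whose thresholds are displaced by $\pm\delta$: on the set $\{|\hat{Q}_s(u) - r_s - Q_s(u)| \le \delta\}$, the event $\{\hat{Q}_s(\gamma/2) - r_s < \bar\epsilon < \hat{Q}_s(1-\gamma/2) - r_s\}$ is trapped between $\{Q_s(\gamma/2) + \delta \le \bar\epsilon \le Q_s(1-\gamma/2) - \delta\}$ and $\{Q_s(\gamma/2) - \delta < \bar\epsilon < Q_s(1-\gamma/2) + \delta\}$, while the complementary set has $P_s$-probability tending to $0$ by consistency of $\hat{Q}_s$ together with $r_s = o_{P_s}(1)$. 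Letting $s \to \infty$ and then $\delta \downarrow 0$ should pin the limit at $1-\gamma$.

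The hard part is this last step under a moving law. Because $P_s$, and hence the distribution $F_s$ of $\bar\epsilon$, varies with $s$, the sandwich bounds involve $F_s(Q_s(u)\pm\delta)$, and pointwise continuity of each $F_s$ does not by itself force these to converge to $u$ uniformly in $s$: $F_s$ could steepen near its quantiles, so that an $o_{P_s}(1)$ displacement of the thresholds still shifts mass of order one. The clean way to close this gap is to invoke the regularity that is natural in this few-treated setting, where $N_1$ is fixed and only the auxiliary data $\xi_s$ grows: the law of $\bar\epsilon$ is either fixed across $s$ or converges weakly to a continuous limit, which supplies the asymptotic equicontinuity of $F_s$ at the two quantiles. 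Under that condition the sandwich argument (equivalently, a subsequence argument via the probability integral transform) yields $\lim_{s\to\infty} P_s[\bar\alpha \in \mathcal{B}] = 1-\gamma$, with exactness coming precisely from continuity, which rules out conservativeness in either tail.
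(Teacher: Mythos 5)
Your proposal follows the same route as the paper's proof: unpack $\{\phi_c=0\}$ to get the interval in \eqref{eq_prediction_few_treated}, use Assumption \ref{ass_cons_proxy} to write $\widehat{\boldsymbol{\alpha}}-\bar\alpha=\bar\epsilon+r_s$ with $r_s=o_{P_s}(1)$, and compare the coverage event with the oracle event $\{Q_s(\gamma/2)\le\bar\epsilon\le Q_s(1-\gamma/2)\}$, whose probability is exactly $1-\gamma$ by continuity. The only technical difference is the limiting device: the paper passes to a common probability space, extracts subsequences along which all in-probability convergences hold almost surely, sandwiches indicators, and applies Fatou; you run an in-probability sandwich with $\pm\delta$-displaced oracle thresholds and send $s\to\infty$ then $\delta\downarrow 0$. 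These are interchangeable.

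The substantive point is the one you call ``the hard part,'' and you are right to insist on it. With laws moving in $s$, per-$s$ continuity of the law of $\bar\epsilon$ does not close the sandwich: one needs $F_s(Q_s(u)\pm\delta)\to u$ uniformly in $s$, i.e.\ equicontinuity at the two quantiles. Your fix (law of $\bar\epsilon$ fixed, or converging weakly to a continuous limit, so that Polya's theorem gives uniformity) is an assumption not stated in the lemma --- but this is a defect of the lemma, not of your proof. Indeed, the stated conclusion can fail without it: take $N_1=1$, let $\epsilon_1$ under $P_s$ be a mixture placing weight $\gamma$ on $\mathrm{Unif}[-1/s,1/s]$ and weight $1-\gamma$ on $\mathrm{Unif}[1,2]$, take $\hat Q_s=Q_s$ (exact estimation) and a deterministic proxy error $\hat M_1=M_1-2/s$, so $r_s=2/s$. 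Every hypothesis of the lemma holds, yet the coverage of \eqref{eq_prediction_few_treated} tends to $1-\gamma/2$; with $r_s=-2/s$ it tends to $1-3\gamma/2$, i.e.\ undercoverage. The paper's own proof buries exactly this issue in the step asserting that, almost surely, $\liminf_s \mathbf{1}\{Q_s(\gamma/2)\le\bar\epsilon\le Q_s(1-\gamma/2)\}\le\liminf_s \mathbf{1}\{\bar\alpha\in\mathcal{B}\}$: continuity of each $F_s$ only makes each boundary event $\{\bar\epsilon=Q_s(u)\}$ null; it does not prevent $\bar\epsilon$ from sitting within $o(1)$ of the moving quantile with positive probability, which is precisely what breaks that inequality (and the conclusion) in the example above. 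So your write-up is correct under your added regularity condition, that condition (or something equivalent) is genuinely needed, and in this respect your proof is more careful than the one in the paper.
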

		 
		 The previous result, when combined with Lemma \ref{lemma_imply},  shows that, for the usual approaches undertaken in the few treated literature, tests of sharp nulls and prediction sets are intricately connected, in the sense that prediction sets always produce (asymptotic) tests of sharp nulls, and tests of sharp nulls with the structure \eqref{eq_struct} can always be inverted to construct an asymptotically valid prediction set. 
		 
		 Finally, we consider the construction of asymptotically valid confidence intervals for the realised effect, by which we mean set-valued functions of the data $\mathcal{C}$ with the property that:

\begin{equation}
\label{eq_confidence_valid}
    \begin{aligned}
               P_s\left[\frac{1}{N_1}\sum_{i=1}^{N_1}\alpha_i \in \mathcal{C}\Big| \alpha_1,\ldots, \alpha_{N_1}\right] \overset{P_s}{\to} 1- \gamma\, .
    \end{aligned}
\end{equation}

A natural question to be asked is whether, in our triangular array setup where the distribution of the data varies with $s$, Assumption \ref{ass_cons_proxy} is equivalent to consistency, conditional on $(\alpha_1,\ldots \alpha_{N_1})$, of the proxies ``in probability'' -- the notion of consistency that one requires to construct sets satisfying \eqref{eq_confidence_valid}.
        The next lemma shows that this is in fact true. 
        
			\begin{lemma}
            \label{lemma_relate}
					$|\hat{M}_{i} -{M}_{i} | \overset{P_s}{\to} 0$ if, and only if, $P_s[|\hat{M}_{i} -{M}_{i} | > \epsilon|\alpha_{1}, \alpha_{2}, \ldots, \alpha_{N_1} ]  \overset{P_s}{\to} 0$ for every $\epsilon >0$.
					\begin{proof}
					    See Appendix \ref{proof_lemma_array}.
					\end{proof}
			\end{lemma}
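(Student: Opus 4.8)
The plan is to reduce the stated equivalence to an elementary relationship between convergence in probability and convergence in mean for a uniformly bounded sequence. I would write $Z_s \coloneqq |\hat{M}_i - M_i|$ and, for each $\epsilon > 0$, let $g_s(\epsilon) \coloneqq P_s[Z_s > \epsilon \mid \alpha_1,\ldots,\alpha_{N_1}]$ denote a version of the conditional probability, which is a random variable taking values in $[0,1]$. The conditional-convergence statement in the lemma is precisely that $g_s(\epsilon) \overset{P_s}{\to} 0$ for every $\epsilon$, while the unconditional statement is $P_s[Z_s > \epsilon] \to 0$ for every $\epsilon$. The single identity tying the two together is the tower property $E_{P_s}[g_s(\epsilon)] = P_s[Z_s > \epsilon]$, valid for each $s$ and each $\epsilon$. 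Everything then comes down to comparing, for the bounded nonnegative sequence $(g_s(\epsilon))_s$, the statements ``$g_s(\epsilon) \to 0$ in $P_s$-probability'' and ``$E_{P_s}[g_s(\epsilon)] \to 0$''.

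For the direction from unconditional to conditional convergence, I would invoke Markov's inequality: since $g_s(\epsilon) \geq 0$, for any $\delta > 0$ we have $P_s[g_s(\epsilon) > \delta] \leq E_{P_s}[g_s(\epsilon)]/\delta = P_s[Z_s > \epsilon]/\delta$, which tends to $0$ by hypothesis. As $\delta > 0$ and $\epsilon > 0$ are arbitrary, this gives $g_s(\epsilon) \overset{P_s}{\to} 0$ for every $\epsilon$. This direction is immediate.

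For the converse, from conditional to unconditional convergence, I would exploit uniform boundedness to run a bounded-convergence argument by hand. Fixing $\epsilon, \delta > 0$, I would split $E_{P_s}[g_s(\epsilon)] = E_{P_s}[g_s(\epsilon)\,\mathbf{1}\{g_s(\epsilon) \le \delta\}] + E_{P_s}[g_s(\epsilon)\,\mathbf{1}\{g_s(\epsilon) > \delta\}] \le \delta + P_s[g_s(\epsilon) > \delta]$, using $0 \le g_s(\epsilon) \le 1$ to bound the second term by the probability of the event. Taking $\limsup_s$ and using $P_s[g_s(\epsilon) > \delta] \to 0$ yields $\limsup_s E_{P_s}[g_s(\epsilon)] \le \delta$; letting $\delta \downarrow 0$ shows $E_{P_s}[g_s(\epsilon)] = P_s[Z_s > \epsilon] \to 0$ for every $\epsilon$, i.e. $Z_s \overset{P_s}{\to} 0$.

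The main thing to watch is that the whole argument lives in a triangular array where the law $P_s$ itself varies with $s$, so one cannot appeal to a fixed dominating function or to the classical bounded/dominated convergence theorem directly. The resolution is that the dominating bound here is the deterministic constant $1$, uniform in $s$, which is exactly what makes the split in the previous paragraph valid for every $s$ simultaneously; this uniform boundedness, rather than any integrability input, is the crux of that direction. A secondary and purely bookkeeping point is that conditional probabilities are defined only up to $P_s$-null sets, but since convergence in probability is insensitive to modifications on null sets, the choice of version of $g_s(\epsilon)$ is immaterial and the tower identity holds for any version.
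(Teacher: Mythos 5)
Your proof is correct, and the forward direction (Markov's inequality applied to the tower identity $E_{P_s}[g_s(\epsilon)] = P_s[Z_s>\epsilon]$) coincides exactly with the paper's argument. For the converse, however, you take a genuinely different and more self-contained route. The paper handles the varying-law issue by invoking the canonical representation to place the triangular array $(|\hat{M}_{i,s}-M_{i,s}|,\alpha_{1,s},\ldots,\alpha_{N_1,s})$ on a single probability space whose laws reproduce each pushforward $\mathbb{Q}_s$, and then cites the bounded convergence theorem (in its convergence-in-probability form, per Durrett) on that fixed space. You instead re-prove the needed instance of bounded convergence by hand: the truncation bound $E_{P_s}[g_s(\epsilon)] \le \delta + P_s[g_s(\epsilon)>\delta]$ holds for each $s$ separately, so no common space is ever needed; the uniform bound $g_s(\epsilon)\le 1$ does all the work, exactly as you say. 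Your route buys a shorter, more elementary argument that sidesteps the representation-theorem machinery entirely (and with it any bookkeeping about how conditional probabilities transfer to the constructed space), while the paper's route buys the convenience of quoting a standard theorem rather than reproducing its two-line proof, at the cost of the extra construction. Your closing observations --- that the crux of the converse is the deterministic dominating constant $1$ rather than any integrability input, and that the choice of version of the conditional probability is immaterial for convergence in probability --- are both accurate and address precisely the points where a naive appeal to dominated convergence in a triangular array would be sloppy.
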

            
            At its essence, the previous result reveals that the way we treat treatment effect heterogeneity -- either as stochastic or as ``realized and conditioned-on'' -- is ``irrelevant'' for assessing the correct notion of consistency of the proxies required by different inference methods. Indeed, our result shows that verifying if the proxies are consistent -- which is required for asymptotic validity of prediction intervals -- or conditionally consistent in probability -- which is required for validity of methods for inference on the realized effect -- always produces the same conclusion, given that both notions are equivalent.
         
         Given the result in Lemma \ref{lemma_relate}, it follows that, in constructing confidence sets for the realized effect, one could adopt a similar formulation to \eqref{eq_prediction_few_treated}, but replacing the estimator of the quantiles of the distribution of $\frac{1}{N_1}\sum_{i=1}^{N_1}\epsilon_i$ with a consistent estimator of the quantiles of the conditional on $(\alpha_i, \ldots, \alpha_{N_1})$ distribution of $\frac{1}{N_1}\sum_{i=1}^{N_1}\epsilon_i$. However, as argued in earlier sections, one difficulty of this approach is that consistency of the estimator of conditional quantiles requires assumptions on the relation between $\alpha_i$ and $\epsilon_i$ that may be hard to justify in practice. 
            
   In light of this difficulty, our final result provides a novel connection between confidence sets for the realized treatment effects and prediction sets. Specifically, we show that methods for conducting inference on treatment effects that rely on a consistent estimator of the distribution $e \mapsto P_s\left[\frac{1}{N_1}\sum_{i=1}^{N_1}\epsilon_i \leq e \right]$ can be either seen as valid prediction intervals, or as a valid method for inference on the realized effect, under the additional assumption that treatment effects are independent of the $\epsilon_i$. 
   
   \begin{proposition}
   \label{prop_results}
       Suppose that Assumption \ref{ass_cons_proxy} holds. Let $\hat \delta_s$ be a consistent estimator of a parameter $\delta_s$ belonging to a normed space $(\Delta,\lVert \cdot \rVert)$, i.e. $\lVert \hat \delta_s - \delta_s \rVert \overset{P_s}{\to} 0$. Define $\boldsymbol{X} = (X_i)_{i=1}^{N_1}$, and fix $\gamma \in (0,1/2)$. If the conditional distribution of $\frac{1}{N_1}\sum_{i=1}^{N_1}\epsilon_i$ is identified as, for every $e \in \mathbb{R}$:

       $$P_s\left[\frac{1}{N_1}\sum_{i=1}^{N_1}\epsilon_i \leq e \Big|\boldsymbol{X}\right] = \Psi_s(e|\boldsymbol{X}; \delta_s)\,,$$
with $\Psi_s(\cdot|\boldsymbol{X}; \delta_s)$ being a continuous distribution function with quantile function $q_s(\cdot|\boldsymbol{X})$ such that, for some $\nu, \kappa > 0$ and sequence $Z_s = O_{P_s}(1)$ of random variables:

\begin{equation}
    \begin{aligned}
           \label{eq_lip_cond} 
           \lVert \delta - \delta_s \rVert < \kappa \implies \\ \implies  \sup_{e \in \mathcal{E}} | \Psi_s(e|\boldsymbol{X}; \delta) - \Psi_s(e|\boldsymbol{X}; \delta_s)|\leq Z_s \lVert  \delta - \delta_s\rVert\, , 
    \end{aligned}
\end{equation}
where 
$$\mathcal{E} := (q_s(\gamma/2-\nu|\boldsymbol{X}), q_s(\gamma/2+\nu|\boldsymbol{X})+\nu) \cup (q_s(1-\gamma/2-\nu|\boldsymbol{X}), q_s(1-\gamma/2+\nu|\boldsymbol{X})+\nu) \, ,$$
we then have that:
\begin{equation*}
    P_s\left[\frac{1}{N_1}\sum_{i=1}\alpha_i \in [ \boldsymbol{\hat \alpha}- q(1-\gamma/2|\boldsymbol{X}; \hat\delta_s), \boldsymbol{\hat \alpha}- q(\gamma/2|\boldsymbol{X}; \hat\delta_s)\Big| \boldsymbol{X}\right] \overset{P_s}{\to} (1-\gamma)\, 
\end{equation*}

Alternatively, if the posited model as

   $$P_s\left[\frac{1}{N_1}\sum_{i=1}^{N_1}\epsilon_i \leq e \Big|\boldsymbol{X}, \alpha_{1}, \alpha_{2}, \ldots, \alpha_{N_1}\right] = \Psi_s(e|\boldsymbol{X}; \delta_s)\,,$$
   with $\Psi_s(\cdot|\boldsymbol{X}; \delta_s)$ being continuous with quantile function $q_s(\cdot|\boldsymbol{X})$ and satisfying \eqref{eq_lip_cond}, then:

   $$P_s\left[\frac{1}{N_1}\sum_{i=1}\alpha_i \in [ \boldsymbol{\hat \alpha}- q(1-\gamma/2|\boldsymbol{X}; \hat\delta_s), \boldsymbol{\hat \alpha}- q(\gamma/2| \boldsymbol{X}; \hat\delta_s)\Big| \boldsymbol{X}, \alpha_{1}, \alpha_{2}, \ldots, \alpha_{N_1}, \right] \overset{P_s}{\to} (1-\gamma)
  $$
   \end{proposition}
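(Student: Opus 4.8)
The plan is to reduce the coverage statement to a probability statement about $\bar\epsilon := \frac{1}{N_1}\sum_{i=1}^{N_1}\epsilon_i$, whose conditional law given $\boldsymbol{X}$ is the posited $\Psi_s(\cdot|\boldsymbol{X};\delta_s)$. First, using $Y_i=Y_i(1)$ and $Y_i(0)=M_i+\epsilon_i$, I would write $\widehat{\boldsymbol\alpha}-\frac{1}{N_1}\sum_i\alpha_i = \bar\epsilon + R_s$ with $R_s=\frac{1}{N_1}\sum_i(M_i-\hat M_i)$; since $N_1$ is fixed, Assumption \ref{ass_cons_proxy} gives $R_s=o_{P_s}(1)$. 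Writing $\hat q_s(u):=q_s(u|\boldsymbol{X};\hat\delta_s)$ and $q_s^*(u):=q_s(u|\boldsymbol{X};\delta_s)$, the coverage event is then equivalent to $\hat q_s(\gamma/2)-R_s\le \bar\epsilon\le \hat q_s(1-\gamma/2)-R_s$, so everything is an event about $\bar\epsilon$ with randomly perturbed endpoints.

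Next I would convert the sampling error in $\hat\delta_s$ into a controlled perturbation of the quantile endpoints. On the high-probability event $\{\lVert\hat\delta_s-\delta_s\rVert<\kappa,\ Z_s\le K\}$ (with $K$ fixed so that $P_s[Z_s>K]$ is small, possible since $Z_s=O_{P_s}(1)$), condition \eqref{eq_lip_cond} gives $\sup_{e\in\mathcal{E}}|\Psi_s(e|\boldsymbol{X};\hat\delta_s)-\Psi_s(e|\boldsymbol{X};\delta_s)|\le \omega_s$ with $\omega_s:=Z_s\lVert\hat\delta_s-\delta_s\rVert=o_{P_s}(1)$. Using that a continuous CDF $F$ satisfies $F(q(u))=u$, a routine CDF--quantile inversion (valid because the neighborhoods in $\mathcal{E}$, including the $+\nu$ extension, keep the evaluation points inside $\mathcal{E}$) sandwiches the estimated quantiles by the true ones: for $\omega_s<\nu$, $q_s^*(\gamma/2-\omega_s)\le \hat q_s(\gamma/2)\le q_s^*(\gamma/2+\omega_s)$, and likewise at $1-\gamma/2$. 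The continuity needed for $\hat q_s$ is inherited from the continuity of the family at laws near $\delta_s$.

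I would then fix a small buffer $\tau\in(0,\nu)$ and work on the good event $G_s^\tau=\{\lVert\hat\delta_s-\delta_s\rVert<\kappa,\ Z_s\le K,\ \omega_s<\tau,\ |R_s|<\tau\}$, which satisfies $P_s[(G_s^\tau)^c]\to 0$ for each fixed $\tau$. On $G_s^\tau$ the coverage event is sandwiched as $E_\tau^-\subseteq\text{coverage}\subseteq E_\tau^+$, where $E_\tau^-=\{q_s^*(\gamma/2+\tau)+\tau\le\bar\epsilon\le q_s^*(1-\gamma/2-\tau)-\tau\}$ and $E_\tau^+=\{q_s^*(\gamma/2-\tau)-\tau\le\bar\epsilon\le q_s^*(1-\gamma/2+\tau)+\tau\}$ have endpoints that are deterministic given $\boldsymbol{X}$. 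Since the conditional law of $\bar\epsilon$ given $\boldsymbol{X}$ is continuous, $P_s[E_\tau^\pm|\boldsymbol{X}]$ equal $1-\gamma$ up to differences of $F_s$ at $\tau$-shifted arguments. Combining the sandwich with $P_s[(G_s^\tau)^c|\boldsymbol{X}]\overset{P_s}{\to}0$ (which follows from the unconditional statement via a Markov/$L^1$ bound on the nonnegative conditional probability), then letting $s\to\infty$ and afterwards $\tau\downarrow 0$, delivers $P_s[\text{coverage}|\boldsymbol{X}]\overset{P_s}{\to}1-\gamma$.

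For the second claim the argument is verbatim with $\boldsymbol{X}$ replaced by $(\boldsymbol{X},\alpha_1,\ldots,\alpha_{N_1})$: the posited model makes the conditional law of $\bar\epsilon$ given $(\boldsymbol{X},\alpha_1,\ldots,\alpha_{N_1})$ equal to the same $\Psi_s(\cdot|\boldsymbol{X};\delta_s)$, and Lemma \ref{lemma_relate} guarantees that the unconditional consistency in Assumption \ref{ass_cons_proxy} coincides with conditional-on-$\alpha$ consistency, so $R_s=o_{P_s}(1)$ in the conditional sense needed. The step I expect to be the main obstacle is the last one above: controlling the location shift $R_s$ (and the $\tau$-shifts it forces in the arguments of $F_s$) when only continuity, and not a density bound, is available, so that $F_s$ at $\tau$-shifted arguments returns to $\gamma/2$ and $1-\gamma/2$ as $\tau\downarrow0$. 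This is precisely where I would invoke a converging-together (Slutsky-type) argument exploiting continuity of the target distribution, with the order of limits ($s\to\infty$ first, then $\tau\downarrow0$) and the $\nu$-neighborhood construction ensuring the perturbations are absorbed, while the $\hat\delta_s$-error is already neutralized cleanly through \eqref{eq_lip_cond}.
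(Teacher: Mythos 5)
Your proposal retraces the paper's own argument almost step for step: the decomposition $\widehat{\boldsymbol{\alpha}}-\frac{1}{N_1}\sum_{i}\alpha_i=\bar\epsilon+R_s$ with $R_s=o_{P_s}(1)$, the use of \eqref{eq_lip_cond} plus consistency of $\hat\delta_s$ to sandwich the estimated quantiles $q_s(u|\boldsymbol{X};\hat\delta_s)$ between true quantiles at perturbed levels (the paper does this via Lemma 21.1 of van der Vaart 1998, you via ``CDF--quantile inversion''; these are the same device), the resulting sandwich of the coverage event, and a double limit ($s\to\infty$ first, then the perturbation to zero). The only substantive difference is bookkeeping: you stay in probability on explicit good events $G_s^\tau$, whereas the paper passes to an almost-sure representation and invokes the conditional Fatou lemma. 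The problem is precisely the step you flag as the ``main obstacle'': it is a genuine gap, and the Slutsky-type fix you propose cannot close it under the stated hypotheses.

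A converging-together argument needs the laws involved to settle down to a fixed continuous limit. Here the conditional law $\Psi_s(\cdot|\boldsymbol{X};\delta_s)$ varies with $s$, and continuity of each $\Psi_s$ separately gives no uniform-in-$s$ modulus of continuity; hence, for fixed $\tau>0$, the $\tau$-shift in the \emph{argument} of $\Psi_s$ forced by $R_s$ can swallow an arbitrarily large amount of probability mass no matter how large $s$ is, and sending $\tau\downarrow 0$ after $s\to\infty$ does not help because the inner limit may already be degenerate. Concretely, take $\epsilon_i\overset{iid}{\sim}N(0,s^{-2})$ with known (log-)scale parameter, so that $\hat\delta_s=\delta_s$ and \eqref{eq_lip_cond} holds with $Z_s$ constant, and take $\hat M_i=M_i+1/\log s$, so that Assumption \ref{ass_cons_proxy} holds: every hypothesis of the proposition is satisfied, yet $R_s=-1/\log s$ dwarfs the $O(1/s)$ width and location of the interval, and the conditional coverage tends to $0$ rather than $1-\gamma$. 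So no argument can complete your final step as stated; one needs extra structure, e.g.\ uniform-in-$s$ equicontinuity (or a density lower bound) for $\Psi_s$ near the $\gamma/2$ and $1-\gamma/2$ quantiles, weak convergence of the law of $\bar\epsilon$ to a fixed continuous distribution, or a rate on $\hat M_i-M_i$ relative to the scale of $\bar\epsilon$. You should also know that the paper's own proof does not resolve this difficulty either: in its conditional-Fatou display, the ``inner'' event is defined through $\widehat{\boldsymbol{\alpha}}$ (hence still contains $R_s$), and its conditional probability is asserted to equal $1-\gamma-2l$ ``by continuity of the true distribution,'' which amounts to silently setting $R_s=0$. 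Your write-up makes explicit, and honestly, a hole that the paper's proof papers over; under any of the strengthenings above, both your argument and the paper's go through.
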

   \begin{proof}
       See Appendix \ref{proof_prop_main}.
   \end{proof}
\begin{example}
To illustrate our main result, it is useful to consider the setting of \cite{ferman2019inference}. The authors consider a situation with $N_0$ controls, for which observed outcomes correspond to the potential outcome under nontreatment status; and $N_1$ treated units, for which observed outcomes correspond to the potential outcome under treatment status. In what follows, we index random variables concerning the treated subgroup by the superscript $1$, and random variables pertaining to the controls by the superscript $0$. In our notation, \cite{ferman2019inference} adopt the following model for the potential outcomes under nontreatment:

$$Y^d_i(0) = \mu + h(X_i^d) \epsilon_i^d\,\quad d \in \{0,1\}, \ldots i =1,\ldots N_d\, , $$
where $X_i^d$ is an observable characteristic (in their motivating application where outcomes $Y_i^d$ denote group-averages, these correspond to group sizes), $\mu$ is a fixed constant -- which corresponds to the proxies $M_i^1$ for untreated potential outcomes of the treated group in our setting --, and the $\{\epsilon_i^0\}_{i=1}^{N_0} \cup \{\epsilon_i^1\}_{i=1}^{N_1}  $ are iid zero mean random variables with continuous distribution, independent from the $\{X_i^0\}_{i=1}^{N_0} \cup \{X_i^1\}_{i=1}^{N_1}$.\footnote{In the setting of \cite{ferman2019inference}, where the $Y^d_i$ correspond to the difference in average outcomes between a pre- and posttreatment periods, the assumption that the $\epsilon^d_i$ have zero-mean corresponds to a parallel trends assumption on the variation of untreated outcomes. The assumption that the $Y^d_i(0)$ follows a scale model may then be seen as a a strengthening of parallel trends to hold ``in distribution'', in the sense that, while one allows for different variances for $Y^d_i(0)$ in the treated and control groups according to the observed traits $X_i^d$, the distribution of untreated outcomes is, up to such scale changes, otherwise the same across groups. } 

Under a parametric model for the function $h(x)=h(x;\theta_0)$, and given a consistent estimator $\hat{\theta}$ of $\theta_0$ such that $\max_{d \in \{0,1\}} \max_{i=1\,,\ldots, N_d} |h(X^d_i;\theta_0)-h(X^d_i;\hat \theta)| = o_P(1)$, \cite{ferman2019inference} provide a consistent estimator for the conditional on $\boldsymbol{X}$ distribution of $\frac{1}{N_1}\sum_{i=1}^{N_1} \epsilon_{i}^1$. Their estimator consists in, first, estimating the normalized residuals $\hat{\xi}^0_i = \frac{Y^0_i - \hat{\mu}}{h(X_i;\hat \theta)}$,\footnote{In their paper, \cite{ferman2019inference} actually focus on estimators of the distribution that rely on ``null-imposed'' residuals that leverage the values of treated outcomes $Y_i^1$ subtracted of the value of the treatment effect under the null being tested. This increases the number of residuals from $N_0$ to $N_0+N_1$. Both estimators are asymptotically equivalent in the large $N_0$-limit that is the focus of their paper, and we therefore focus on the estimator that does not rely on treatment group residuals for conciseness.} where $\hat{\mu} = \frac{1}{N_0}\sum_{i=1}^{N_0} Y^0_i$. We then use these residuals to estimate the distribution of  $\frac{1}{N_1}\sum_{i=1}^{N_1} \epsilon_{i}^1$ conditional on  $\boldsymbol{X}$ as:

    $$\hat{\Psi}(c|\boldsymbol{X}) = \frac{1}{N_1^{N_0}}\sum_{j_1=1}^{N_0}  \sum_{j_2=1}^{N_0}\ldots \sum_{j_{N_1}=1}^{N_0} \mathbf{1}\left\{\frac{1}{N_1}\sum_{i=1}^{N_1} h(X_{i};\hat \theta)\hat{\xi}_{j_i}^0{h(X_{j_i};\hat \theta)} \leq c\right\}\, , c \in \mathbb{R} \, . $$

    Finally, for a given confidence level $1-\gamma$, we can construct an interval:

    $$\mathcal{I}_{1-\gamma}=\left[\frac{1}{N_1} \sum_{i=1}^{N_1} Y_i^1 - \hat{\mu} - \hat{q}(1-\gamma/2|\boldsymbol{X}), \frac{1}{N_1} \sum_{i=1}^{N_1} Y_i^1 - \hat{\mu} + \hat{q}(\gamma/2|\boldsymbol{X})\right]\,,$$
    where $\hat{q}$ denotes the quantile function obtained from inversion of $\hat{\Psi}$.

    While \cite{ferman2019inference} derived results in a setting with non-stochastic treatment effects, under the stated scale model for the untreated potential outcomes, and given the consistency of the parametric model for heteroskedasticity, Proposition \ref{prop_results} shows that $\mathcal{I}_{1-\gamma}$ can be simultaneously seen as an asymptotically valid prediction interval for $\frac{1}{N_1}\sum_{i=1}^{N_1} \alpha^1_i$; as well an asymptotically valid confidence interval for the realized treatment effect \emph{under the additional assumption} that the $ \{\alpha^1_i\}_i$ are independent from the $\{\epsilon^1_i\}_i$.
\end{example}

		\bibliographystyle{apalike}
		\bibliography{references}

        \appendix

        \section{Failure of prediction intervals in achieving conditional coverage}
\label{example_failure}
        We provide a simple example that illustrates how prediction intervals generally fail to generate conditional-on-treatment-effect coverage. Consider a simple setting where the outcome of a single treated unit is observed. Her untreated potential outcome is given by $Y(0)\sim N(0,1)$. Suppose that the treated potential outcome (which is observed) is given by $Y(1) = 2Y(0)$. In this case, a valid 95\% prediction interval for the treatment effect $\tau = Y(1) - Y(0)$ is given by:

        $$\mathcal{I} = [Y  - 1.96 , Y  + 1.96 ] \,,$$
        where $Y=Y(1)$ denotes the observed outcome. Validity follows immediately from observing that:

        $$\tau \in \mathcal{I}\iff Y(0) \in [-1.96 , + 1.96 ]\, , $$
        which implies that $\mathbb{P}[\tau \in \mathcal{I}] = 0.95$. 
        
        Notice, however, that the prediction interval $\mathcal{I}$ generally fails to cover the treatment effect conditionally on $\tau$. Indeed, in our simple example, $\tau = Y(0)$, which implies that $\mathbb{P}[\tau \in \mathcal{I}|\tau]$ is either zero or one. However, coverage is precisely one with probability 95\% (over possible realisations of $\tau$), and $0$ in the remainder 5\% of the possible cases, which generates the exact unconditional coverage of 95\% .

        \section{Counterexample of test statistic whose inversion does not produce a valid prediction set}

        Consider a simple cross-sectional setting with $N_1=2$ treated units, for which observed outcomes correspond to the potential outcome under treatment status, and a single control unit, for which the observed outcome corresponds to the potential outcome under nontreatment status. In what follows, we index random variables concerning the treated subgroup by the superscript $1$, and random variables pertaining to the control group by the superscript $0$.  
        
        We assume that $Y_1^0=Y_1^0(0)\sim N(\mu,1)$ and that $Y_i^1(0) \sim N(\mu,1)$, $i=1, 2$. Fix $c \in \mathbb{R}$. Consider the decision rule:

        $$\phi_c = \mathbf{1}\left\{\max_{i=1,2} |Y_i^1 - Y_1^0-c|>\iota_{1-\gamma}\right\},$$
        where $\iota_{1-\gamma}$ is the $(1-\gamma)$ quantile of $\max_{i=1,2}|Z_i|$, where $\begin{bmatrix}
            Z_1 \\
            Z_2
        \end{bmatrix} \sim \mathcal{N}\left(\begin{bmatrix}
            0 \\ 
            0
        \end{bmatrix}, \begin{bmatrix}
            2 & 1\\
            1 & 2
        \end{bmatrix}\right)$. This decision rule produces a valid test of the sharp null $P[\{\alpha_1^1=c\}\cap \{\alpha_2^1=c\}]=1$.

        Inverting the family of decision rules $\{\phi_c\}_c$ results in the region:

        $$\mathcal{C} = \begin{cases}
            \left[\max_{i=1,2} (Y_i^1 -Y^1_0) - \iota_{1-\gamma}, \min_{i=1,2} (Y_i^1 -Y^1_0) + \iota_{1-\gamma} \right] & \text{if }  |Y_1^1 - Y_2^1| \leq 2\iota_i \\
            \emptyset & \text{if }  |Y_1^1 - Y_2^1| > 2 \iota_i 
        \end{cases}$$

        Now, if $\alpha_i^1 = \delta Y_i^1(0)$, $Y_1^1 - Y_2^1 \sim N(0, 2(1+\delta)^2)$. Choosing $\delta$ sufficiently large, we can ensure that the probability of $\mathcal{C}$ being empty is larger than $\gamma$, thus showing that $\mathcal{C}$ is not a valid prediction set for the in-sample average treatment effect.
        \label{app_counter_equivalence}
\section{Proofs of results in Section \ref{eq_specialized}}

\subsection{Proof of Lemma \ref{lemma_equiv_few}}
\label{proof_lemma_equiv_few}
   \begin{proof}

       That the set $\mathcal{B}$ takes the form given by \eqref{eq_prediction_few_treated} is straightforward. 
  By the canonical representation \citep[page 26]{Vaart2023}, we may assume that the vector sequence of random variables, implicitly indexed by $s$, $\{\hat{M}_i,M_i,\alpha_i, \epsilon_i\}_{i=1}^{N_1}$, and $\hat{Q}_s(\gamma/2)$  and $\hat{Q}_s(1-\gamma/2)$ are all defined in a common probability space. Passing through subsequences if needed, we may also assume that the convergence in probability also holds almost surely \citep[Theorem 20.5]{Billingsley1995}. But then, since the distribution of $\frac{1}{N_1}\sum_{i=1}^{N_1}\epsilon_i$ is continuous, we have that, almost surely:

\begin{equation*}
    \begin{aligned}
         \liminf_{s\to \infty} \mathbf{1}\left\{Q_s(\gamma/2)\leq\frac{1}{N_1}\sum_{i=1}^{N_1}\epsilon_i \leq Q_s(1-\gamma/2)\right\} \leq 
        \\   \liminf_{s\to \infty}\mathbf{1}\left\{\frac{1}{N_1}\sum_{i=1}^{N_1}\alpha_i\in \mathcal{B}\right\}  \leq \limsup_{s\to \infty}\mathbf{1}\left\{\frac{1}{N_1}\sum_{i=1}^{N_1}\alpha_i\in \mathcal{B}\right\} \leq \\ \limsup_{s\to \infty} \mathbf{1}\left\{Q_s(\gamma/2)\leq\frac{1}{N_1}\sum_{i=1}^{N_1}\epsilon_i \leq Q_s(1-\gamma/2)\right\}
    \end{aligned}
\end{equation*}

The desired conclusion then follows from Fatou lemma  \cite[e.g.][Theorem 16.3]{Billingsley1995}.
        
    \end{proof}
        \subsection{Proof of Lemma \ref{lemma_relate} }
        \label{proof_lemma_array}
        \begin{proof}
		 Suppose that $|\hat{M}_{i} -{M}_{i} | \overset{P_s}{\to} 0$. Fix $\epsilon >0$ and $\xi > 0$. Markov inequality entails:
			
			$$P_s[P_s[|\hat{M}_{i} -{M}_{i} | > \epsilon|\alpha_{1}, \alpha_{2}, \ldots, \alpha_{N_1} ]  > \xi] \leq \frac{P_s[|\hat{M}_{i} -{M}_{i} | > \epsilon] }{\xi} \to 0 \, .$$
			
			For the other direction, suppose that, for each $\epsilon >0$, $P_s[|\hat{M}_{i} -{M}_{i} | > \epsilon|\alpha_{1}, \alpha_{2}, \ldots, \alpha_{N_1} ]  \overset{P_s}{\to} 0$. Let  $\mathbb{Q}_s$ denote the pushforward probability measure of $(|\hat{M}_{i} -{M}_{i} | ,\alpha_{1}, \alpha_{2}, \ldots, \alpha_{N_1})$ for index $s$. By the canonical representation \citep[page 26]{Vaart2023}, we may define  $\{(|\hat{M}_{i,s} -{M}_{i,s} | ,\alpha_{1,s}, \alpha_{2,s}, \ldots, \alpha_{N_1,s})\}_{s=1}^\infty$ in a common probability space with law $\mathbb{Q}$ such that the pushforward law of $(|\hat{M}_{i,s} -{M}_{i,s} | ,\alpha_{1,s}, \alpha_{2,s}, \ldots, \alpha_{N_1,s})$   coincides with $\mathbb{Q}_s$. It then follows by the bounded convergence theorem \citep[e.g.][page 64]{Durrett_2019} that $|\hat{M}_{i} -{M}_{i} | \overset{P_s}{\to} 0$, since, for every $\epsilon > 0$:
			
			$$P_s[|\hat{M}_{i} -{M}_{i} | > \epsilon] = \mathbb{Q}[\mathbb{Q}[|\hat{M}_{i,s} -{M}_{i,s} | > \epsilon|\alpha_{1,s}, \alpha_{2,s}, \ldots, \alpha_{N_1,s} ] ] \, ,$$
			with $\mathbb{Q}[|\hat{M}_{i,s} -{M}_{i,s} | > \epsilon|\alpha_{1,s}, \alpha_{2,s}, \ldots, \alpha_{N_1,s} ] $ being a bounded function that converges in probability to zero .
					\end{proof}

\subsection{Proof of Proposition \ref{prop_results}}
\begin{proof}
\label{proof_prop_main}

We begin by proving the first part of the proposition. By the canonical representation \citep[page 26]{Vaart2023} and the almost-sure representation theorem \citep[Theorem 1.10.3]{Vaart2023}, we may assume that the sequence of random variables is defined in the same probability space, and that, almost surely:

\begin{equation}\lim_{s\to \infty} \max_{i=1,\ldots, N_1} |\hat{M}_{i} -{M}_{i}| = 0\, ,\end{equation}
\begin{equation}
\label{uniform_as}
\lim_{s \to \infty}\max_{e \in \mathcal{E}} | \Psi_s(e|\boldsymbol{X}; \hat{\delta}_s) - \Psi_s(e|\boldsymbol{X}; \delta_s)|  = 0\, .\end{equation}

Fix $l < \nu$, and take $u \in \{\gamma/2,1-\gamma/2\}$. Item (a) of Lemma 21.1 of \cite{Vaart_1998} yields the following implications:

\begin{enumerate}
    \item[i]If $\Psi_s(q_s(u + l|\boldsymbol{X})|\boldsymbol{X}; \hat{\delta}_s) > u$, then $q_s(u + l|\boldsymbol{X}) \geq {q}_s(u|\boldsymbol{X};\hat{\delta}_s)$.
\item[ii] If $\Psi_s(q_s(u - l|\boldsymbol{X})|\boldsymbol{X}; \hat{\delta}_s) < u$, then  $q_s(u -l|\boldsymbol{X}) \leq {q}_s(u|\boldsymbol{X};\hat{\delta}_s)$.
\end{enumerate}

Combining the above with \eqref{uniform_as}, and using that $\Psi_s(q_s(u)|\boldsymbol{X})|\boldsymbol{X}; {\delta}_s) =u$ by continuity of the true distribution, we arrive at:
$$\limsup_{s \to \infty} {q}_s(u -l|\boldsymbol{X}) \leq \limsup_{s\to \infty} {q}_s(u|\boldsymbol{X};\hat{\delta}_s)  \leq \limsup_{s \to \infty} {q}_s(u +l|\boldsymbol{X})  $$

$$ \liminf_{s\to \infty} {q}_s(u-l|\boldsymbol{X})  
 \leq \liminf_{s\to \infty} {q}_s(u|\boldsymbol{X};\hat{\delta}_s)  \leq \liminf_{s \to \infty} {q}_s(u+l|\boldsymbol{X})  $$

Using the first of these inequalities, we obtain, by the conditional Fatou inequality \citep[page 88]{Williams_1991}

\begin{equation*}
\tiny
    \begin{aligned} 1-\gamma - 2l \leq \\
   \underbrace{ \liminf_{s \to \infty} E_s\left[\mathbf{1}\left\{\frac{1}{N_1}\sum_{i=1}\alpha_i \in [ \boldsymbol{\hat \alpha}- q(1-\gamma/2|\boldsymbol{X};\hat \delta_s), \boldsymbol{\hat \alpha}- q(\gamma/2|\boldsymbol{X};\hat \delta_s)\right\}-\mathbf{1}\left\{\frac{1}{N_1}\sum_{i=1}\alpha_i \in [ \boldsymbol{\hat \alpha}- q(1-\gamma/2 -l|\boldsymbol{X}), \boldsymbol{\hat \alpha}- q(\gamma/2+l|\boldsymbol{X})\right\}\Big| \boldsymbol{X}\right]}_{\overset{\text{Fatou}}{\geq}0}
    \\ +
\liminf_{s \to \infty} \underbrace{E_s\left[\mathbf{1}\left\{\frac{1}{N_1}\sum_{i=1}\alpha_i \in [ \boldsymbol{\hat \alpha}- q(1-\gamma/2 -l|\boldsymbol{X}), \boldsymbol{\hat \alpha}- q(\gamma/2+l|\boldsymbol{X})\right\}\Big| \boldsymbol{X}\right]}_{\overset{\text{continuity of true distr.}}{=}1-\gamma - 2l} \leq
\\ \liminf_{s \to \infty}  P_s\left[\frac{1}{N_1}\sum_{i=1}\alpha_i \in [ \boldsymbol{\hat \alpha}- q(1-\gamma/2|\boldsymbol{X};\hat\delta_s), \boldsymbol{\hat \alpha}- q(\gamma/2|\boldsymbol{X};\hat\delta_s)\Big| \boldsymbol{X}\right]
    \end{aligned}
\end{equation*}

    By a similar argument, we obtain, by application of the conditional (reverse) Fatou inequality that:
    $$ \limsup_{s \to \infty}  P_s\left[\frac{1}{N_1}\sum_{i=1}\alpha_i \in [ \boldsymbol{\hat \alpha}- q(1-\gamma/2|\boldsymbol{X};\hat\delta_s), \boldsymbol{\hat \alpha}- q(\gamma/2|\boldsymbol{X};\hat\delta_s)\Big| \boldsymbol{X}\right] \leq 1-\gamma + 2l\, ,$$
    and since the choice of $l$ was arbitrarily, we obtain that, almost surely: 

    $$\lim_{s \to \infty}  P_s\left[\frac{1}{N_1}\sum_{i=1}\alpha_i \in [ \boldsymbol{\hat \alpha}- q(1-\gamma/2|\boldsymbol{X};\hat\delta_s), \boldsymbol{\hat \alpha}- q(\gamma/2|\boldsymbol{X};\hat\delta_s)\Big| \boldsymbol{X}\right] = 1-\gamma\, ,$$
which implies the convergence in probability stated in the proposition.

    The proof of the second part is identical -- one replaces conditional expectations and probabilities wrt $\boldsymbol{X}$ with conditioning on $\boldsymbol{X}$ and the treatment effect -- and is therefore omitted.
\end{proof}

\pagebreak

\end{document}